\documentclass[11p,reqno]{amsart}

\topmargin=0cm\textheight=22cm\textwidth=15cm
\oddsidemargin=0.5cm\evensidemargin=0.5cm
\setlength{\marginparwidth}{2cm}
\usepackage[T1]{fontenc}
\usepackage{graphicx}
\usepackage{amssymb,amsthm,amsmath,mathrsfs,bm,braket,marginnote}
\usepackage{enumerate}
\usepackage{appendix}
\usepackage[colorlinks=true, pdfstartview=FitV, linkcolor=blue, citecolor=blue, urlcolor=blue]{hyperref}
\usepackage{multirow}

%Graphics
\usepackage{pgf}
\usepackage{pgfplots}
\usepackage{tikz}
\usetikzlibrary{arrows,calc}
\usepackage{verbatim}
\usetikzlibrary{decorations.pathreplacing,decorations.pathmorphing}
\usepackage[numbers,sort&compress]{natbib}
\usepackage{dsfont}
\numberwithin{equation}{section}
\linespread{1.2}
\newtheorem{theorem}{Theorem}[section]

\newtheorem{corollary}[theorem]{Corollary}

\newtheorem{proposition}[theorem]{Proposition}

\theoremstyle{remark}
\newtheorem{remark}[theorem]{Remark}

 \reversemarginpar

\begin{document}

\title[Dualities  for the classical  $\beta$-ensembles]{High-low temperature dualities  for the classical  $\beta$-ensembles}

\author{Peter J. Forrester}
\address{School of Mathematical and Statistics, ARC Centre of Excellence for Mathematical and Statistical Frontiers, The University of Melbourne, Victoria 3010, Australia}
\email{pjforr@unimelb.edu.au}

%\subjclass[2010]{15B52, 15A15, 33E20}
\date{}

\dedicatory{}

\keywords{}

\begin{abstract}
The loop equations for the $\beta$-ensembles are conventionally solved in terms of a $1/N$ expansion.
We observe that it is also possible to fix $N$ and expand in inverse powers of $\beta$. At leading order, for the one-point
function $W_1(x)$ corresponding to the average of the linear statistic $A = \sum_{j=1}^N 1/(x - \lambda_j)$, and specialising the classical weights, this reclaims
well known results of Stieltjes relating the zeros of the classical polynomials to the minimum energy configuration of
certain log-gas potential energies. Moreover, it is observed that the differential equations satisfied by $W_1(x)$ in
the case of classical weights --- which are particular Riccati equations --- are simply related to the  differential equations satisfied by $W_1(x)$ 
in the high temperature scaled limit $\beta = 2\alpha/N$ ($\alpha$ fixed, $N \to \infty$), implying a certain
high-low temperature duality. A generalisation of this duality, valid without any limiting procedure,
 is shown to hold for $W_1(x)$ and all its higher point analogues in the classical  $\beta$-ensembles.
\end{abstract}

\maketitle

\section{Introduction}
The eigenvalue probability density function (PDF), supported on $\mathbb R^N$ and
proportional to
\begin{equation}\label{0.1}
\prod_{l=1}^N e^{- \beta \lambda_l^2/ 2} \,\prod_{1\leq j<k\leq N}|\lambda_k-\lambda_j|^{\beta}
\end{equation}
is said to specify the Gaussian $\beta$-ensemble.
Best known are the cases $\beta = 1$ and $\beta = 2$. Then with $X$ a real 
(respectively complex) Gaussian random matrix, forming the real symmetric
(respectively complex Hermitian) random matrix $H = {1 \over 2} (X + X^\dagger)$
specifies ensembles of random matrices with eigenvalue PDF given by (\ref{0.1}).
For general $\beta > 0$, an ensemble of 
tridiagonal random matrices, with entries on
   and above the diagonal each independent,
 can be specified which realise the eigenvalue PDF (\ref{0.1}) \cite{DE02}.
 
 The PDF (\ref{0.1}) can be written in Boltzmann factor form
 \begin{equation}\label{0.2}
 e^{-\beta U^{\rm G}}, \qquad U^{\rm G} := \sum_{l=1}^N x_l^2/2 - \sum_{1 \le j < k \le N} \log | x_k - x_j|.
 \end{equation}
 It is a classical result due to Stieltjes that the minimum of the potential energy $U^{\rm G}$ is unique and
 occurs at the ordered zeros, $z_1 < z_2 < \cdots < z_N$ say, of the Hermite 
 polynomials $H_N(x)$ \cite{St85a,St85b}; for a recent work on this general theme see \cite{JS20}.
 Thus, for $N$ fixed, the eigenvalues specified by (\ref{0.1}) crystallise to these zeros. Define the Hessian
 matrix $\mathcal H$ at the critical point $\mathbf x = \mathbf z$, where $\mathbf z = (z_1, z_2, \dots, z_N)$  according to
 \begin{equation}\label{0.3}
 \mathcal H = \bigg [ {\partial^2 U^{\rm G} \over \partial x_j \partial x_k} \bigg ]_{j,k=1}^N \Big |_{\mathbf x = \mathbf z}.
 \end{equation}
 Set $y_j = x_j - z_j$. Then to leading order as $\beta \to \infty$, upon a multi-variable Taylor expansion, we see that
 (\ref{0.2}) is proportional to  the Boltzmann factor of an oscillator system specified by
  \begin{equation}\label{0.4}
  \exp \Big ( - {\beta \over 2} \mathbf y \mathcal H \mathbf y^T \Big ).
 \end{equation}
 
 Generally, for a continuous classical statistical mechanical system with a (unique) ground state, expanding
 about this point in configuration space is referred to as the harmonic approximation.
 For the log-gas on a circle of radius $L$, where the potential energy term in
 (\ref{0.2}) is now given by
 \begin{equation}\label{0.5} 
 U^{\rm C} := - \sum_{1 \le j < k \le N} \log | e^{2 \pi i x_k/ L}  - e^{2 \pi i x_j/ L} |, \qquad x_l \in [0, L),
 \end{equation}
 and the   ground state is equal spacing in units of $2 \pi L/N$, a study of the harmonic approximation  has
 been given in \cite{Fo93}, \cite[\S 5]{FJM00}, \cite[\S 4.8.2]{Fo10}
 and \cite{SX21}. Note that for this system, the
 ground state only becomes unique upon fixing one of the particles (say setting $x_1 = 0$) due to the rotation
 invariance. Specifically for (\ref{0.2}), the harmonic approximation was first studied in
 \cite{AJP98}, and subsequently in
 \cite{DE05a} where the analysis proceeded through the tridiagonal realisation of the Gaussian
 $\beta$-ensemble introduced in \cite{DE02}. Much earlier, for the two-dimensional version of
 (\ref{0.2}) --- the so called two-dimensional one-component plasma in a disk --- the harmonic approximation
 was studied in \cite{AJ81}.
 
 The equilibrium statistical mechanical system (\ref{0.2}) admits generalisations, which
 furthermore permit analysis in the $\beta \to \infty$ limit. One generalisation is to consider
 the evolution under Brownian dynamics; see e.g.~\cite[Ch.~11]{Fo10}. Works considering the
 behaviour for $\beta \to \infty$, of this precise model, and variants for other log-potential
 systems in one-dimension, include \cite{AKM12,AV19a,AV19b,Vo19,AHV20,VW21,Tr21}. Another generalisation
 is to consider minor processes associated with the tridiagonal realisation of the $\beta$-ensembles,
 with the Gaussian case (\ref{0.2}) being the simplest \cite{FR02b}.
 Such a process in the limit $\beta \to \infty$ has been studied in the works \cite{BG15,GM20,GK20};
 see also \cite[Remark 13]{VP20}.
 In both these lines of work, a mathematical structure to have shown itself is that of the associated
classical  orthogonal polynomials. Here one recalls (see e.g.~\cite{AW84}) that for a family
of orthogonal polynomials $\{p_n(x) \}$ with three term recurrence
 \begin{equation}\label{0.6} 
 p_{n+1}(x) = (A_n x + B_n) p_n(x) - C_n p_{n-1}(x),
 \end{equation}
 the associated orthogonal polynomials $\{p_n(x;\alpha) \}$ satisfy the modified
 three term recurrence
 \begin{equation}\label{0.6c} 
 p_{n+1}(x;c) = (A_{n+c} x + B_{n+c}) p_n(x;c) - C_{n+c} p_{n-1}(x;c).
 \end{equation}
 
 It is also the case that the weight function for the orthogonality relation of the associated
 classical orthogonal polynomials have appeared in studies of the respective $\beta$-ensembles
 in the scaled high temperature regime, $\beta = 2 \alpha/ N$ and $N \to \infty$
 \cite{ABG12,ABMV13,TT20,  Ma20}.   Specifically, with the weight function $e^{-\beta \lambda^2/2}$
 modified to $e^{-\lambda^2/2}$, and with the corresponding PDF then denoted by use
 of the label "G$^*$" and similarly for the density $\rho_{(1)}(x)$ normalised to integrate to unity,
 we have the  limit formula \cite{ABG12}
  \begin{equation}\label{r3+}
  \rho^{\rm G^*}_{(1),0} (x;\alpha) := 
   \lim_{N \to \infty \atop \beta = 2 \alpha/N}
    \rho^{\rm G^*}_{(1)}(x) =
   {1 \over \sqrt{2 \pi} \Gamma(1 + \alpha)}
  {1 \over | D_{-\alpha}(ix)|^2},
 \end{equation}
 where $D_{-\alpha}(z)$ is the so-called parabolic cylinder function. This same
 expression is known to be the weight function for the orthogonality relation of the associated
 Hermite polynomials \cite{AW84}.
 
 In relation to the scaled high temperature regime, it is also known that  \cite{ABG12}
   \begin{equation}\label{u3+}
W_1^{0,\rm G^*}(x;\alpha) :=  \lim_{N \to \infty \atop \beta = 2 \alpha/N}
 \left\langle\sum_{j=1}^N\frac{1}{x-\lambda_j}\right\rangle^{\rm G^*}
=
- {1 \over \alpha} {d \over dx}\Big ( e^{- x^2/4}  \log D_{-\alpha}(ix) \Big );
\end{equation}
in fact it was after first deriving this expression that (\ref{r3+}) was deduced in \cite{ABG12}.
Thus
  \begin{equation}\label{u3++}
  W_1^{0,\rm G^*}(x;\alpha)  = \int_{-\infty}^\infty { \rho^{\rm G^*}_{(1),0} (y;\alpha)  \over x -  y} \, dy,
  \end{equation}
  and so $\rho^{\rm G^*}_{(1),0}$ can be deduced from knowledge of $ W_1^{0,\rm G^*}$ by
  applying the formula for the inverse Stieltjes transform.
Starting from the known differential equation for the  parabolic cylinder function
\cite[\S 12.2]{DLMF}, and with $g(z):= e^{z^2/4} D_{-\alpha}(z)$,
we have that $g(z)$ satisfies the second order linear differential equation
  \begin{equation}\label{u4+}
  g''(z) - z g'(z) - a g(z) = 0.
  \end{equation}
  On the other hand, Stieltjes result gives that in relation to (\ref{0.1}) itself (this will
  be labelled by the use of "G" as used in (\ref{0.2})), now taking the low temperature limit with $N$ fixed,
  we have
    \begin{equation}\label{u5+}
 W_1^{0,\rm G}(x;N) :=  \lim_{\beta \to \infty}
 \left\langle\sum_{j=1}^N\frac{1}{x-\lambda_j}\right\rangle^{\rm G} = {d \over d x} \log H_N(x).
    \end{equation}  
    The differential equation satisfied by $H_N(x)$ is
  \begin{equation}\label{u6+}   
f''(x) - 2 x f'(x) + 2N f(x) = 0.
 \end{equation}  
 Thus (Corollary \ref{c2.3})  
 \begin{equation}\label{w1}
   {i \over \sqrt{2} N} W_1^{0, \rm G}(ix/\sqrt{2};N) 
   = W_1^{0, \rm G^*}(x; - N).
   \end{equation}
   
   It is shown in Remark \ref{r2.4}, point 1, that the known functional equation for the
   moments of the density of (\ref{0.1}), for $N$ and $\kappa:=\beta/2$ fixed
   \cite{DE05}
  \begin{equation}\label{w4+} 
  m_{2p}^{\rm G}(N,\kappa) =  (-1)^{p+1} \kappa^{-p-1}     m_{2p}^{\rm G}(-\kappa N,\kappa^{-1} ),
  \end{equation}
  permits an independent derivation of (\ref{w1}). With regards to (\ref{w4+}), note that
  as well as mapping $\kappa$ to $1/\kappa$, and thus high temperature to low temperature, 
  the variable $N$ is mapped to $-\kappa N$, which can thus no longer be interpreted as the
  number of eigenvalues in (\ref{0.1}). Nevertheless, it is well defined since $  m_{2p}^{\rm G}$
  is a polynomial of degree $p+1$ in $N$; see e.g.~\cite{WF14}. 
 
 In addition to providing the details of the derivation of (\ref{w1}) and its relationship to (\ref{w4+}),
 we will extend the working to the Laguerre
 and Jacobi $\beta$-ensembles. Thus we will proceed
 by considering the differential equation satisfied by $g(x)$
 in the formula 
 $ W_1^{0,\rm C^*}(x) \propto {d \over dx} \log g(x)$, where $\rm C^*$ denotes the scaled high temperature limit of the
 respective classical $\beta$-ensemble, and that satisfied by $f(x)$ in the formula
 $ W_1^{0,\rm C}(x) = {d \over dx}  \log f(x)$ corresponding to the low temperature limit $\beta \to \infty$ as implied
 by the results of Stieltjes relating $f(x)$ to a classical orthogonal polynomial.
 Moreover, we will  show that the functional equation (\ref{w1}) can be generalised to higher point
 extensions of the averages in (\ref{u3+}) and (\ref{u5+}), specified for $n=2,3$ by appropriate limits of
\begin{equation}\label{1.4c+} 
 \overline{W}_n(x_1,\dots,x_n; \kappa, N)  :=
\Big  \langle    \prod_{l=1}^n (A_l  - \langle A_l \rangle )
 \Big \rangle
, \quad A_i =  \sum_{j=1}^N\frac{1}{x_i-\lambda_j}.
\end{equation}
For general $n \ge 2$ define $\bar{W}_n$ by the requirement that it be the generating function for the
mixed cumulants $\{ \mu_{p_1,\dots,p_n}\}$ as indexed by $p_1,\dots,p_n$,
\begin{equation}\label{1.16a}
\overline{W}_n(x_1,\dots,x_n; N, \kappa) =
 {1 \over x_1 \cdots x_n} \sum_{p_1,\dots,p_n = 0}^\infty {\mu_{p_1,\dots,p_n}(N,\kappa) \over
 x_1^{p_1} \cdots x_n^{p_n}},
 \end{equation}
 considered as a formal series. With $n = n_0 \ge 4$ this is can equivalently be given as a multinomial in
 the averages on the RHS of (1.16) with $n=2,\dots,n_0$, homogeneous of degree $n$ in the total number of
 factors of the form $(A_l - \langle A_l \rangle )$. The resummation holds as an analytic function when
 the eigenvalue support is compact (Jacobi case) and the $x_i$ are large enough; otherwise (\ref{1.16a})
 is the asymptotic expansion of the definition in terms of averages.
 The definition in terms of averages and moments in the case $n = 1$ is
 \begin{equation}\label{1.4d}
 \overline{W}_1(x_1;N,\kappa) = \langle A_1 \rangle = {1 \over x} \sum_{p=0}^\infty {m_p(N,\kappa) \over x^p};
  \end{equation}
in relation to the first equality note that (\ref{1.4c+}) is not appropriate as it vanishes for $n=1$, while in relation to the
second note that for $n=1$ the cumulants and the moments are the same quantity.

\begin{proposition}\label{p1.1}
For the Gaussian $\beta$-ensemble with weight
$e^{-\kappa x^2}$, denote the quantity $\overline{W}_n$ specified in the above paragraph by
$\overline{W}_n^{\rm G}(x_1,\dots,x_n;N,\kappa)$. Specifically the expansion (\ref{1.16a}) is to be used to provide
a meaning for continuous values of $N$.
We have
\begin{equation}\label{1.4e} 
 \overline{W}_n^{\rm G}(x_1,\dots,x_n; N, \kappa)
 =  ( - i /   \sqrt{\kappa })^n    \overline{W}_n^{\rm G}(i  \sqrt{\kappa} x_1,\dots, i  \sqrt{\kappa}  x_n ; - \kappa N, \kappa^{-1}).
\end{equation}
\end{proposition}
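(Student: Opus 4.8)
The plan is to obtain (\ref{1.4e}) from the full hierarchy of loop equations for the Gaussian $\beta$-ensemble, by exhibiting this hierarchy as covariant under $x_a\mapsto i\sqrt\kappa\,x_a$, $N\mapsto-\kappa N$, $\kappa\mapsto\kappa^{-1}$ together with the rescaling $\overline{W}_n\mapsto(i\sqrt\kappa)^n\overline{W}_n$. Concretely, set
\begin{equation*}
V_n(x_1,\dots,x_n):=(-i/\sqrt\kappa)^n\,\overline{W}_n^{\rm G}\big(i\sqrt\kappa\,x_1,\dots,i\sqrt\kappa\,x_n;-\kappa N,\kappa^{-1}\big),
\end{equation*}
so that (\ref{1.4e}) is precisely the assertion $V_n=\overline{W}_n^{\rm G}(\,\cdot\,;N,\kappa)$ for all $n\ge1$. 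I would prove this by showing (a) that the sequence $\{V_n\}$ satisfies the same hierarchy of loop equations as $\{\overline{W}_n^{\rm G}(\,\cdot\,;N,\kappa)\}$, and (b) that this hierarchy, together with the normalisation $\overline{W}_1(x_1)=N/x_1+O(x_1^{-3})$, has a unique solution. The $n=1$ instance --- which is the known moment identity (\ref{w4+}) after collecting powers of $i$ and $\kappa$ --- then emerges as a special case rather than being assumed.

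First I would record the hierarchy. Inserting $\sum_j\partial_{\lambda_j}\big[(x_1-\lambda_j)^{-1}\prod_{a=2}^nA_a\cdot(\mathrm{PDF})\big]$ under the integral sign, using $\sum_j(x_1-\lambda_j)^{-2}=-\partial_{x_1}A_1$, the identity $\sum_{j\ne k}\big((x_1-\lambda_j)(\lambda_j-\lambda_k)\big)^{-1}=\tfrac12\big(A_1^2+\partial_{x_1}A_1\big)$, and the partial-fraction rearrangement of $\sum_j(x_1-\lambda_j)^{-1}(x_a-\lambda_j)^{-1}$, and then passing from the disconnected averages to connected correlators (the disconnected term $2N\langle\prod_{a\ge2}A_a\rangle$ collapsing against the lower-order equations, which is exactly what closes the system), one arrives for each $n\ge1$, with $I=\{2,\dots,n\}$ and $\kappa=\beta/2$, at
\begin{equation*}
\overline{W}_{n+1}(x_1,x_1,x_I)+\sum_{J\subseteq I}\overline{W}_{1+|J|}(x_1,x_J)\,\overline{W}_{1+|I\setminus J|}(x_1,x_{I\setminus J})+\Big(1-\tfrac1\kappa\Big)\partial_{x_1}\overline{W}_n(x_1,x_I)-2x_1\overline{W}_n(x_1,x_I)+2N\delta_{n,1}+\tfrac1\kappa\sum_{a\in I}\partial_{x_a}\frac{\overline{W}_{n-1}(x_I)-\overline{W}_{n-1}(x_1,x_{I\setminus\{a\}})}{x_1-x_a}=0;
\end{equation*}
for $n=1$ this is the exact Riccati-type relation $\overline{W}_1^2+(1-1/\kappa)\,\overline{W}_1'-2x_1\overline{W}_1+2N+\overline{W}_2(x_1,x_1)=0$, whose $\beta\to\infty$ form is the Riccati equation equivalent to (\ref{u6+}). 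By the large-$|x_a|$ expansion of $\overline{W}_n$ in joint cumulants of the power sums $\sum_j\lambda_j^k$, namely (\ref{1.4f}), these cumulants are polynomials in $N$, so each of the above relations, being polynomial in $N$ coefficient-by-coefficient, persists for all $N\in\mathbb C$.

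The crucial step is covariance. Writing the $n$-th equation at parameters $(-\kappa N,\kappa^{-1})$ and substituting $x_a\mapsto i\sqrt\kappa\,x_a$: each $\overline{W}_m^{\rm G}(i\sqrt\kappa\,x;-\kappa N,\kappa^{-1})$ equals $(i\sqrt\kappa)^m V_m(x)$; the coefficient $1-\kappa^{-1}$ evaluated at the dual $\kappa$ becomes $1-\kappa$; each $\partial_{\tilde x_a}$, re-expressed through $\partial_{x_a}$, brings a factor $(i\sqrt\kappa)^{-1}$; each difference $\tilde x_1-\tilde x_a$ contributes a factor $i\sqrt\kappa$; and the source prefactor $\kappa^{-1}$ becomes $\kappa$. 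Using $(i\sqrt\kappa)^2=-\kappa$ one checks that every term then carries the common factor $(i\sqrt\kappa)^{n+1}$: for the quadratic terms and for $\overline{W}_{n+1}(x_1,x_1,x_I)$ immediately; the derivative term becomes $(i\sqrt\kappa)^{n+1}(1-\kappa)(i\sqrt\kappa)^{-2}\partial_{x_1}V_n=(i\sqrt\kappa)^{n+1}(1-1/\kappa)\partial_{x_1}V_n$; the constant $-2\kappa N\,\delta_{n,1}$ equals $(i\sqrt\kappa)^2\cdot2N\,\delta_{n,1}$; and in the source term the accumulated scalar is $\kappa\cdot(i\sqrt\kappa)^{n-1}\cdot(i\sqrt\kappa)^{-2}=-(i\sqrt\kappa)^{n-1}=(i\sqrt\kappa)^{n+1}\kappa^{-1}$, i.e.\ exactly the source prefactor of the original equation. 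Dividing through by $(i\sqrt\kappa)^{n+1}$ reproduces, identically, the $n$-th loop equation at $(N,\kappa)$ with every $\overline{W}_m$ replaced by $V_m$; hence $\{V_n\}$ solves the hierarchy at $(N,\kappa)$. This is the step most sensitive to normalisation conventions: one must fix the precise combinatorial form of the $\overline{W}_{n-1}$ source term --- in particular check that the disconnected $2N$-term does collapse for $n\ge2$ --- and then track the $i\sqrt\kappa$ factors so that they combine with $\kappa^{-1}\mapsto\kappa$ to the single power $(i\sqrt\kappa)^{n+1}$.

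It remains to invoke uniqueness. Reading off the coefficient of $x_1^{-m-1}\prod_{a\ge2}x_a^{-k_a-1}$ in the $n$-th loop equation expresses the cumulant $\langle p_{m+1}\prod_{a\ge2}p_{k_a}\rangle_c$ in terms of cumulants of strictly smaller total degree $\sum_a k_a$, cumulants with strictly fewer arguments, and (via the $\overline{W}_1^2$, source and $2N$ terms) the one-point data; a joint induction on total degree and number of arguments thus determines all the joint cumulants recursively from $\overline{W}_1(x_1)=N/x_1+O(x_1^{-3})$ --- the usual Schwinger--Dyson mechanism of topological recursion, here applied at fixed $N$ with expansion in the $1/x_a$ in place of $1/N$. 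Both $\overline{W}_n^{\rm G}(\,\cdot\,;N,\kappa)$ and $V_n$ satisfy the hierarchy and have a one-point function with leading coefficient $N$ --- for $V_1$ because the degree-zero moment of the dual ensemble is the continued particle number $-\kappa N$ and the scalar prefactors combine to $N$ --- so they coincide for every $n$, giving (\ref{1.4e}). The one point demanding care throughout is that the recursion, and hence the uniqueness argument, proceeds by identities polynomial in $N$, so that equality for positive integer $N$ forces it at the continued value $-\kappa N$ supplied by (\ref{1.4f}).
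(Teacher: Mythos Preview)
Your argument is correct and follows the same strategy as the paper: show that the full hierarchy of Gaussian loop equations is invariant under $(x_a,N,\kappa)\mapsto(i\sqrt\kappa\,x_a,-\kappa N,\kappa^{-1})$ together with the rescaling $\overline{W}_n\mapsto(i\sqrt\kappa)^n\overline{W}_n$, and then invoke uniqueness of the solution. The paper quotes the loop equations rather than re-deriving them and does the factor-tracking in two lines, but the substance of the covariance check is identical to yours.

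The one genuine point of difference is the uniqueness justification. The paper triangularises the hierarchy via the $1/N$ expansion, citing \cite{BG12} for its rigorous validity. You instead triangularise at fixed $(N,\kappa)$ by total degree in the $1/x_a$ expansion, noting that the $-2x_1\overline{W}_n$ term in the $n$-th equation isolates the highest-degree $n$-point cumulant while every other contribution --- including that of $\overline{W}_{n+1}$ --- has strictly lower total degree. Your route is more self-contained, avoiding the analytic input from \cite{BG12}; the paper's route has the compensating advantage that polynomiality of the mixed moments in $N$, which you need for the continuation $N\mapsto-\kappa N$ and must invoke separately, is an immediate consequence of the cited rigorous expansion. (A minor point to double-check: the sign of the numerator in your $\overline{W}_{n-1}$ source term appears opposite to the form in (\ref{3.10}); this does not affect either the covariance computation or the uniqueness argument.)
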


To see the relevance of (\ref{1.4e}) to (\ref{w4+}),  substitute the moment expansion (\ref{1.4d})
and equate coefficients of like powers of $1/x$ on both sides to reclaim the latter.
 The proof of Proposition \ref{p1.1} is given in Section \ref{S2}, along with other  high-low temperature
  dualities in the Gaussian case such as (\ref{w1}). Analogous workings are undertaken in Sections \ref{S3} and \ref{S4} for the
  Laguerre and Jacobi $\beta$-ensembles respectively.

 \section{High-low temperature duality for $W_1^{\rm G}$ and generalisation to $W_n^{\rm G}$}\label{S2}
 \subsection{The zero temperature limit of $W_1^{\rm G}$}
 Generalising (\ref{0.1}), consider the family of eigenvalue PDFs proportional to
 \begin{equation}\label{1.2}
\prod_{l=1}^Nw(\lambda_l)\,\prod_{1\leq j<k\leq N}|\lambda_k-\lambda_j|^{\beta},
\end{equation}
where $w(\lambda)$ is referred to as the weight function. We use the notation
ME${}_{\beta,N}[w]$ in reference to (\ref{1.2}), and the terminology 
$\beta$-ensemble, with the name associated with the weight specified as an adjective.
Thus, for example, ME${}_{\beta,N}[e^{-\beta \lambda^2/2}]$ is referred to as the Gaussian $\beta$-ensemble,
as used in association with (\ref{0.1}).

 Let $A = \sum_{j=1}^N a(\lambda_j)$ be a general linear statistic, and consider its average
 $\langle A \rangle_{ {\rm ME}_{\beta,N}[w]}$. Let
 $\rho_{(1)}(\lambda)$ denote the corresponding eigenvalue density.
 This is specified by the requirement that $\int_a^b\rho_{(1)}(\lambda) \, \mathrm{d}\lambda$ 
 be equal to the expected number of eigenvalues in a general
interval $[a,b]$. It relates to the average of the linear statistic $A$ by
 \begin{equation}\label{1.3}
 \langle A \rangle_{ {\rm ME}_{\beta,N}[w]} = \int_{-\infty}^\infty a(\lambda) \rho_{(1)}(\lambda) \, d \lambda.
\end{equation}

The two-point correlation $\rho_{(2)}(\lambda_1, \lambda_2)$ can be specified by the
requirement when divided by
$\rho_{(1)}(\lambda_2)$, it is equal to the
eigenvalue density at $\lambda_1$, given there is an eigenvalue at $\lambda_2$. It relates to the
average of the product of two linear statistics
\begin{equation}\label{1.4}
 \langle A B \rangle_{ {\rm ME}_{\beta,N}[w]} =  \int_{-\infty}^\infty 
  \int_{-\infty}^\infty
 a(\lambda_1) b(\lambda_2)    \Big (
  \rho_{(2)}(\lambda_1, \lambda_2)  + \delta(\lambda_1 - \lambda_2)   \rho_{(1)}(\lambda) \Big )
  \, d \lambda_1  d \lambda_2.
  \end{equation}
 Equivalently, for the covariance
 \begin{align}\label{1.4a}
 {\rm Cov} \, (A,B) &:=     \langle (A  - \langle A \rangle )
 (B - \langle B) \rangle_{{\rm ME}_{\beta,N}[w]}
  \\ \nonumber & =
 \int_{-\infty}^\infty 
  \int_{-\infty}^\infty
 a(\lambda_1) b(\lambda_2)    \Big (
  \rho_{(2)}^T(\lambda_1, \lambda_2)  + \delta(\lambda_1 - \lambda_2)   \rho_{(1)}(\lambda) \Big )
  \, d \lambda_1  d \lambda_2,
  \end{align}
  where $   \rho_{(2)}^T(\lambda_1, \lambda_2)  :=   \rho_{(2)}(\lambda_1, \lambda_2)  - \rho_{(1)}(\lambda_1) 
  \rho_{(1)}(\lambda_2)$.
  
  Introduce the linear statistics and covariances
   \begin{equation}\label{1.4b} 
   \overline{W}_1(x) := \left\langle\sum_{j=1}^N\frac{1}{x-\lambda_j}\right\rangle_{{\rm ME}_{\beta,N}[w]}, \quad
   \overline{W}_2(x_1,x_2)  :=
    {\rm Cov} \, \bigg (   \sum_{j=1}^N\frac{1}{x_1-\lambda_j}, \sum_{j=1}^N\frac{1}{x_2-\lambda_j} \bigg )_{{\rm ME}_{\beta,N}[w]}.
\end{equation}
 Later, we will have use too for the $n$-point generating function (\ref{1.16a}) viewed in terms of the
 averages (\ref{1.4c+}), which extends the above definition of
 $\overline{W}_2(x_1,x_2)$.
For the $\beta$-ensemble with Gaussian weight $w(x) = e^{- \beta x^2/ 2}$, the quantities (\ref{1.4b}) are
inter-related by the particular loop equation
\cite{BEMN10}, \cite{BMS11}, \cite{MMPS12}, \cite{WF14}
\begin{equation}\label{L1}
\Big ( ( \kappa - 1)  {\partial \over \partial x_1} - 2 \kappa  x_1 \Big ) \overline{W}_1^{\rm G}(x_1;N,\kappa) + 2 N \kappa
+ \kappa \Big ( \overline{W}_2^{\rm G}(x_1, x_1;N,\kappa) + (\overline{W}_1^{\rm G}(x_1;N,\kappa) )^2 \Big ) = 0,
\end{equation}
where $\kappa = \beta/2$ as in (\ref{w4+}).
%and
%\begin{multline}\label{L2}
%\Big ( ( \kappa - 1) {\partial \over \partial x_1} - 2 \kappa  x_1 \Big ) \overline{W}_2(x_1, x_2) +
%{\partial \over \partial x_2} \Big ( {\overline{W}_1(x_1) - \overline{W}_1(x_2) \over x_1 - x_2 } \Big ) \\
%+ \kappa \Big ( \overline{W}_3(x_1,x_1,x_2) + 2 \overline{W}_2(x_1, x_2) \overline{W}_1(x_1) \Big ) = 0,
%\end{multline}
The equation (\ref{L1}) is in fact the first in an infinite hierarchy
involving $\{   \overline{W}_n  \}_{n=1}^\infty $; see subsection \ref{S2.3} below.

We see that the equation (\ref{L1}) contains two unknowns; generally the $n$-th loop
equation involves $\{   \overline{W}_n(x_1,\dots,x_k;N,\kappa)  \}_{k=1}^{n+1}$. It has been known for some
time (see \cite{Mi83}) that a triangular system of equations results from an appropriate $1/N$ expansion
of each of the $ \overline{W}_n$, with an essential point being that this quantity decays at leading order as $N^{2-n}$.
In the recent work \cite{FM21} it was shown that a triangular system also results in the case
$\beta$ is proportional to $1/N$, even though each $ \overline{W}_n$ is proportional to $N$.
Here our interest is $N$ fixed, but $\beta$ large. For this we make use of the expansions
\begin{equation}\label{L3}
 \overline{W}_1^{\rm G}(x;N,\kappa)  = W_1^{0, \rm G}(x;N) + {1 \over \kappa} W_1^{1, \rm G}(x;N) + \cdots, \quad
  \overline{W}_2^{\rm G}(x_1, x_2;N,\kappa)  = {1 \over \kappa} W_2^{0,G}(x_1, x_2;N) +  \cdots,
    \end{equation}
    where higher order terms are in higher powers of $1/\kappa$.
    Note that the meaning of 
$W_1^{\rm 0,G}(x;N)$ here is consistent with (\ref{u5+}).
  
 In relation to justifying (\ref{L3}), in particular the first relation, as is consistent with
 (\ref{1.4d}) recall 
 that
   the moments of the spectral density $\{ m_{2p}^{\rm G} \}$ relate to 
  the linear statistic $ \overline{W}_1^{\rm G}(x)$ by the large $x$ expansion
\begin{equation}\label{v1}
 \overline{W}_1^{\rm G}(x;N,\kappa) = {1 \over x} \sum_{p=0}^\infty {m_{2p}^{\rm G}(N,\kappa) \over x^{2p} }, \qquad m^{\rm G}_{2p}(N,\kappa)  = \int_{-\infty}^\infty x^{2p}  \rho_{(1)}(x) \, dx,
\end{equation}
where use has been made of the fact that since the Gaussian weight is symmetric about the origin, the odd moments vanish.
Thus the expansion for $\overline{W}_1^{\rm G}(x;N,\kappa) $ in (\ref{L3}) implies the corresponding  large $\beta$ expansion of the moments
\begin{equation}\label{v2}
 m_{2p}^{\rm G}(N,\kappa)  = m_{2p}^{0, \rm G}(N) + {1 \over \kappa} m_{2p}^{1,\rm G}(N) + \cdots,
 \end{equation}
 where the $m_{2p}^{k,\rm G}(N)$ are independent of $\beta$. 
 Rescaling $m_{2p}^{\rm G}(N,\kappa) = 2^{-p} \tilde{m}_{2p}^{\rm G}(N,\kappa)$ for notational convenience,
 for low orders we have the exact
 results \cite[\S 4.3]{WF14}
 \begin{align}\label{2.12}
 \tilde{m}_0^{\rm G}(N,\kappa)  & = N  \nonumber \\
 \tilde{m}_2^{\rm G}(N,\kappa)  & = N^2 - N + \kappa^{-1} N   \nonumber \\
 \tilde{m}_4^{\rm G}(N,\kappa)  & = 2 N^3 - 5N^2 + 3 N + \kappa^{-1} (5 N^2 - 5 N)  + \cdots \nonumber \\
 \tilde{m}_6^{\rm G}(N,\kappa)  & = 5 N^4 - 22 N^3 + 32 N^2 - 15 N  + \kappa^{-1} (22 N^3 - 54 N^2 + 32 N)  + \cdots \nonumber \\
 \tilde{m}_8^{\rm G}(N,\kappa) & = 14 N^5 -93 N^4 + 234 N^3 - 260 N^2 + 105 N + \kappa^{-1} ( 93 N^4 - 398 N^3 + 565 N^2 - 260 N) + \cdots,
 \end{align}
 where terms not shown are higher order in $\kappa^{-1}$ (each $\tilde{m}_{2p}$ is a polynomial in $\kappa^{-1}$ of degree
 $p$). This is consistent with (\ref{v2}) and thus the first expansion in (\ref{L3}). Furthermore, 
with $N$ large the $\kappa$ dependence in (\ref{L3})
 has previously been established \cite[\S 3]{WF14}.
 
 The formal justification for fixed $N$ is to use the Laplace method of asymptotic analysis, whereby the integrand corresponding to the averages in 
 (\ref{v1}) is expanded about the critical point $\boldmath \lambda = \mathbf z$. Let $A_i := \sum_{j=1}^N 1/ (x_i - \lambda_j)$.
 Recalling the working leading to (\ref{0.4}), this shows
 $$
 \langle A_1 \rangle = \sum_{j=1}^N {1 \over x - z_j} + \langle \mathbf y \mathcal H_1 \mathbf y^T \rangle_Q + \cdots, \qquad
 \langle A_2 \rangle =  \langle \mathbf y \mathcal H_2 \mathbf y^T \rangle_Q + \cdots.
 $$
 Here higher order terms involve powers of $y_j^{2p}$ for $p \ge 2$,
 $\mathcal H_1$ and $\mathcal H_2$ are some $N \times N$ symmetric matrices, and the average over $Q$ is with respect to the PDF corresponding to
(\ref{0.4}). Changing variables $\sqrt{\beta} \mathbf y \mapsto \mathbf y$ implies (\ref{L3}). 
 
 Substituting (\ref{L3}) in (\ref{L1}) and equating leading terms for large $\beta$ gives a differential equation
 involving $W_1^{0, \rm G}(x)$ only,
 \begin{equation}\label{2.13}
 \Big ( {1 \over 2} {d \over d x} - x \Big ) W_1^{0, \rm G}(x;N) + N + {1 \over 2} ( W_1^{0, \rm G}(x;N) )^2 = 0.
  \end{equation}
  This Riccati equation can readily be solved in terms of Hermite polynomials.
  
  \begin{proposition}\label{p2.1}
  We have
   \begin{align}\label{2.14}
    W_1^{0, \rm G}(x;N) & = {d \over d x} \log H_N(x) = {N \over x} + {d \over d x} \Big ( 1 + N!  \sum_{m=1}^\infty {(-1)^m (2 x)^{-2m} \over m! (N - 2m)!} \Big ) \nonumber \\
    & = {N \over x} + {N^2 - N \over 2 x^3} + {2N^3  - 5N^2 + 3 N \over 2^2 x^5} + {5N^4  - 22 N^3 + 32 N^2 -15 N \over 2^3 x^7} + \cdots 
   \end{align}
   Furthermore, writing 
   \begin{equation}\label{2.14a}  
     W_1^{0, \rm G}(x;N) = {1 \over x} \sum_{p=0}^\infty {m_{2p}^{0, \rm G}(N) \over x^{2p} }
     \end{equation}
     as is consistent with (\ref{v1}) and (\ref{v2}), we have that $\{  m_{2p}^{0, \rm G} \}$ satisfy the recurrence
   \begin{equation}\label{2.15}  
   2 m_{k+2}^{0, \rm G}(N)  = \sum_{s=0}^{k/2} m_{2s}^{0, \rm G}(N)  m_{k-2s}^{0, \rm G}(N)  - (k + 1) m_{k}^{0, \rm G}(N), \qquad m_0^{0, \rm G}(N) = N,
   \end{equation}
   valid for $k=0,2,4,\dots$.
   \end{proposition}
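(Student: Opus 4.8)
The plan is to linearise the Riccati equation (\ref{2.13}) and then extract from it both the closed form and the recurrence. Writing $W_1^{0, \rm G}(x;N) = (\log f(x))' = f'(x)/f(x)$, one has $\tfrac12 (W_1^{0, \rm G})' = \tfrac12 f''/f - \tfrac12 (W_1^{0, \rm G})^2$, so that (\ref{2.13}) collapses to $\tfrac{1}{2f}\big(f'' - 2x f' + 2N f\big) = 0$; that is, $f$ solves Hermite's differential equation (\ref{u6+}). The solution space of (\ref{u6+}) is two-dimensional, spanned by $H_N(x)$ and a second solution growing like $e^{x^2}$ times a power of $x$; for the latter, or for any combination involving it, $f'/f \sim 2x$ as $x \to \infty$, whereas (\ref{v1}) requires $W_1^{0, \rm G}(x;N)$ to vanish as $x \to \infty$ (with leading coefficient the normalisation $m_0^{0, \rm G}(N) = N$). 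Hence $f$ must be proportional to the polynomial solution $H_N$, giving $W_1^{0, \rm G}(x;N) = \frac{d}{dx}\log H_N(x)$; this is of course consistent with Stieltjes' result (\ref{u5+}), which furnishes an alternative route to the same identification.

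For the explicit expansion in (\ref{2.14}), I would substitute the standard closed form $H_N(x) = (2x)^N\big(1 + N!\sum_{m\ge1}(-1)^m(2x)^{-2m}/(m!(N-2m)!)\big)$, so that $\log H_N(x) = N\log(2x) + \log\big(1 + N!\sum_{m\ge1}\cdots\big)$, and then differentiate: the first displayed line of (\ref{2.14}) is immediate, and the second follows by expanding $\log(1+u)$ to the required order. This last step is a routine power-series manipulation; the first few coefficients can be cross-checked against the $\kappa \to \infty$ limit of the moments listed in (\ref{2.12}).

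For the recurrence (\ref{2.15}), the cleanest approach is to bypass $H_N$ and substitute the ansatz (\ref{2.14a}), $W_1^{0, \rm G}(x;N) = \sum_{j \ge 0} m_{2j}^{0, \rm G}(N)\, x^{-2j-1}$, directly into (\ref{2.13}). The term $-x W_1^{0, \rm G}$ contributes $-\sum_j m_{2j}^{0, \rm G} x^{-2j}$, the term $\tfrac12 (W_1^{0, \rm G})'$ contributes $-\sum_j (j+\tfrac12)\, m_{2j}^{0, \rm G}\, x^{-2j-2}$, and $\tfrac12 (W_1^{0, \rm G})^2$ contributes $\tfrac12 \sum_{k\ge0}\big(\sum_{l+l'=k} m_{2l}^{0, \rm G} m_{2l'}^{0, \rm G}\big) x^{-2k-2}$. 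Equating the coefficient of $x^0$ gives $m_0^{0, \rm G}(N) = N$, while equating the coefficient of $x^{-2j-2}$ for $j \ge 0$ and relabelling $k = 2j$ yields precisely (\ref{2.15}).

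There is no serious obstacle here: the only point requiring care is the selection of the correct solution branch of the linearised equation in the first step, where one must invoke the prescribed large-$x$ behaviour of $W_1^{0, \rm G}$ (equivalently, the harmonic-approximation derivation of (\ref{L3})) to discard the non-polynomial solution of (\ref{u6+}); the rest is elementary algebra and formal power-series bookkeeping. It is worth remarking in passing that (\ref{2.15}) together with $m_0^{0, \rm G}(N)=N$ determines all the $m_{2p}^{0, \rm G}(N)$ uniquely, and polynomially in $N$, thereby giving a self-contained characterisation of $W_1^{0, \rm G}(x;N)$ as a formal series that is consistent with the closed form above.
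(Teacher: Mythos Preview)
Your proof is correct and follows essentially the same approach as the paper: linearise the Riccati equation (\ref{2.13}) via $W_1^{0,\rm G} = (\log f)'$ to obtain Hermite's equation, select the polynomial solution using the large-$x$ behaviour, and derive the recurrence by substituting the power-series ansatz (\ref{2.14a}) directly into (\ref{2.13}). You supply more detail than the paper on discarding the non-polynomial solution and on extracting the explicit expansion from the closed form of $H_N$, but the underlying argument is the same.
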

   
   \begin{proof}
  Setting
   $$
    W_1^{0, \rm G}(x;N) = {d \over d x} \log  f(x), \qquad f(x) \mathop{\sim}\limits_{x \to \infty} c x^N,
   $$
   where $c$ is a nonzero constant, we see from (\ref{2.13}) that $f$ satisfies the linear second order differential equation
  \begin{equation}\label{2.16}   
  f''(x) - 2 x f'(x) + 2 N f(x) = 0.
  \end{equation}
  The unique solution of this equation satisfying the required boundary condition is the Hermite polynomial $H_N(x)$.
  The recurrence (\ref{2.15}) follows by substituting (\ref{2.14a}) in (\ref{2.13}) and equating like powers of $1/x$.
  \end{proof}

  \begin{remark}
  1.~We see that the explicit form of the moments $\{  m_{2p}^{0, \rm G}  \}$ as read off from (\ref{2.14}) and (\ref{2.14a})
  agree with the leading (with respect to $1/\kappa$) terms in (\ref{2.12}). \\
  2.~For $\beta \to \infty$, the result of Stieltjes discussed below (\ref{0.2}) tells us that
  $$
   W_1^{0, \rm G}(x;N) =  \sum_{j=1}^N {1 \over x - z_j},
   $$
   where $\{ z_j \}$ denote the zeros of $H_N(x)$. Thus
  \begin{equation}\label{r1}
    W_1^{0, \rm G}(x;N) =  {d \over d x} \log H_N(x)
  \end{equation}
  in agreement with (\ref{2.14}). \\
  3.~It follows from (\ref{2.15}) that each $m_{2p}^{0, \rm G}$ is a polynomial of
  degree $p+1$ in $N$ which vanishes when $N = 0$ and $N = 1$. The coefficient of $N^{p+1}$
  is recognised as $C_p/2^p$, where $\{ C_p \}$ denotes the Catalan numbers. Indeed,
  writing $m_{2p}^{0, \rm G} \mathop{\sim} a_p N^{p+1}/2^p$ for $N \to \infty$ we see that
  (\ref{2.15}) simplifies to read
   \begin{equation}\label{2.17} 
   a_{p+1} = \sum_{s=0}^p a_s a_{p - s}, \qquad a_0 = 1,
   \end{equation}
   which has the unique solution $a_p = C_p$. This is in keeping with the well known result 
   (see e.g.~\cite{KM16}) that the density of the
   zeros of the Hermite polynomials, scaled by $\sqrt{2/N}$, is given by the Wigner
   semi-circle law
    \begin{equation}\label{2.18a}
    \rho_{(1)}^{\rm W}(x) = {1 \over \pi} \sqrt{ 1 - (x/2)^2}, 
    \end{equation}
    supported on $|x| < 2$, as follows from the moment formula
    \begin{equation}\label{2.18b}  
    \int_{-2}^2 x^{2p}  \rho_{(1)}^{\rm W}(x)  \, dx = C_p.
   \end{equation}  
  \end{remark}
  
  \subsection{High-low temperature duality of $W_1^{\rm G}$}\label{S2.2}
  Making the replacements
  \begin{equation}\label{s1}
  x \mapsto i x/ \sqrt{2}, \qquad W_1^{0, \rm G}(i x/ \sqrt{2};N) \mapsto (\sqrt{2} N/ i) g(x;N)
 \end{equation} 
 in (\ref{2.13}) shows
  \begin{equation}\label{s2}
  \Big ( -  {\partial \over \partial x} - x \Big ) g(x;N)  + 1 - N (  g(x;N))^2 = 0,
  \qquad   g(x;N)) \mathop{\sim}\limits_{x \to \infty} {1 \over x}.
  \end{equation}  
  This same differential equation (up to the meaning of $N$) is known in the study of the
  Gaussian $\beta$-ensemble ME${}_{\beta, N} [ e^{- x^2/2} ]$, now scaled at high temperature
  by setting $\beta = 2 \alpha / N$ then taking $N \to \infty$ \cite{ABG12}.
  Hence writing
  \begin{equation}\label{s3} 
{1 \over N}  \left\langle\sum_{j=1}^N\frac{1}{x-\lambda_j}\right\rangle_{{\rm ME}_{\beta,N}[e^{-x^2/2}]}
\bigg |_{\beta = 2 \alpha / N} = W_1^{0, \rm G^*}(x;\alpha) + {1 \over N} W_1^{1, \rm G^*}(x;\alpha) + \cdots
  \end{equation}  
  (here we are using G${}^*$ to indicate the Gaussian $\beta$-ensemble based on the weight $e^{- x^2/2}$
  rather than on the weight $e^{-\beta x^2/2}$ as relates to Proposition \ref{p2.1})
  one has that $W_1^{0, \rm G^*}(x;\alpha) $ satisfies (\ref{s2}) but with $-N$ replaced by $\alpha$.
  We can thus relate $W_1^{0, \rm G}$ to $W_1^{0, \rm G^*}$, and similarly relate the corresponding moments.
  
  \begin{corollary}\label{c2.3}
  We have the identity (\ref{w1}).
  Also, with the moments $\{ m_{2p}^{0, \rm G}(N) \}$ specified as in (\ref{2.14a}), and the moments
$\{ m_{2p}^{0, \rm G^*}(\alpha) \}$  specified according to
  \begin{equation}\label{w2}
  W_1^{0, \rm G^*}(x;\alpha) = {1 \over x} \sum_{p=0}^\infty { m_{2p}^{0, \rm G^*}(\alpha) \over x^{2p}},
   \end{equation}
   it follows from (\ref{w1}) that
   \begin{equation}\label{w3}  
   {(-2)^p \over N } m_{2p}^{0, \rm G}(N)  = m_{2p}^{0, \rm G^*}(-N).
   \end{equation}
   \end{corollary}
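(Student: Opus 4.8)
The plan is to assemble what has already been established in Subsections 2.1--2.2 and then read off the moment identity (\ref{w3}) by a large-$x$ expansion. Two ingredients are in hand. First, by Proposition \ref{p2.1} the function $W_1^{0,\rm G}(x;N)$ is the solution of the Riccati equation (\ref{2.13}) subject to $W_1^{0,\rm G}(x;N)\sim N/x$ as $x\to\infty$; applying the change of variables (\ref{s1}) to (\ref{2.13}) shows that
\[
g(x):=\frac{i}{\sqrt{2}\,N}\,W_1^{0,\rm G}\!\big(ix/\sqrt{2};N\big)
\]
satisfies (\ref{s2}) with $g(x)\sim 1/x$. Second, the scaled high-temperature result recalled above from \cite{ABG12} says that $W_1^{0,\rm G^*}(x;\alpha)$ satisfies (\ref{s2}) with $-N$ replaced by $\alpha$, again with the decay $W_1^{0,\rm G^*}(x;\alpha)\sim 1/x$ at infinity. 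Specialising $\alpha=-N$, the functions $g$ and $W_1^{0,\rm G^*}(\,\cdot\,;-N)$ solve the \emph{same} differential equation with the \emph{same} boundary behaviour, and identifying them is precisely (\ref{w1}).

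To make this identification rigorous I would argue uniqueness by the device already used in the proof of Proposition \ref{p2.1}. Inserting the formal expansion $g(x)=\sum_{p\ge 0}c_p\,x^{-2p-1}$ into (\ref{s2}) and equating coefficients of powers of $1/x$ forces $c_0=1$ and produces a recurrence expressing $c_{p+1}$ through $c_0,\dots,c_p$, the exact counterpart of (\ref{2.15}). Hence all the $c_p$ are uniquely determined, and since both $g$ and $W_1^{0,\rm G^*}(\,\cdot\,;-N)$ admit such an expansion --- by (\ref{2.14a}) and (\ref{w2}) respectively --- they must coincide. (Alternatively one linearises (\ref{s2}) by writing $g\propto(\log f)'$, reducing it to a second-order linear equation --- cf.\ (\ref{2.16}) and (\ref{u4+}) at $\alpha=-N$ --- whose solution consistent with $g\sim 1/x$ is unique up to an overall constant that cancels in $(\log f)'$.)

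Granting (\ref{w1}), the moment relation (\ref{w3}) is then immediate. I would substitute the expansions (\ref{2.14a}) and (\ref{w2}) into (\ref{w1}); using $(ix/\sqrt{2})^{-1}=\sqrt{2}/(ix)$ and $(ix/\sqrt{2})^{-2p}=(-2)^p x^{-2p}$, the left-hand side becomes $(Nx)^{-1}\sum_{p\ge 0}(-2)^p m_{2p}^{0,\rm G}(N)\,x^{-2p}$, while the right-hand side is $x^{-1}\sum_{p\ge 0}m_{2p}^{0,\rm G^*}(-N)\,x^{-2p}$. Matching the coefficient of $x^{-2p-1}$ yields $(-2)^p m_{2p}^{0,\rm G}(N)/N=m_{2p}^{0,\rm G^*}(-N)$, which is (\ref{w3}). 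This incidentally shows that $m_{2p}^{0,\rm G^*}(\alpha)$ is a polynomial of degree $p$ in $\alpha$, in agreement with $m_{2p}^{0,\rm G}(N)$ having degree $p+1$ in $N$ and vanishing at $N=0$, so that the specialisation $\alpha=-N$ is well defined.

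The one step that genuinely calls for care is the uniqueness claim: (\ref{s2}) is only first order, and the condition $g(x)\sim 1/x$ is imposed at the irregular singular point $x=\infty$, so one cannot simply invoke uniqueness for an initial value problem at a regular point. Routing the argument through the power-series recurrence above (or, equivalently, through the linearisation to a second-order linear ODE as in (\ref{2.16})/(\ref{u4+})) is what removes this difficulty; the remainder of the proof is bookkeeping.
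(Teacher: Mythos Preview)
Your proposal is correct and follows essentially the same approach as the paper. The paper in fact gives no explicit proof for this corollary: the preceding paragraph in \S\ref{S2.2} records that the substitution (\ref{s1}) transforms (\ref{2.13}) into (\ref{s2}), and that $W_1^{0,\rm G^*}(x;\alpha)$ satisfies (\ref{s2}) with $-N$ replaced by $\alpha$, after which (\ref{w1}) and (\ref{w3}) are stated as immediate consequences. Your write-up supplies exactly these steps, with the welcome addition of an explicit uniqueness argument (via the moment recurrence, as in the proof of Proposition~\ref{p2.1}) that the paper leaves implicit.
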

   
  \begin{remark}\label{r2.4}
  1.~For the Gaussian $\beta$-ensemble with weight $e^{-\beta x^2/2}$  it is known that the moments $m_{2p}^{\rm G}(N,\kappa)$,
  for fixed $N$ and $\kappa$, 
  satisfy the functional equation (\ref{w4+}). 
  From the definitions $m_{2p}^{\rm G}(N,\kappa) =  \beta^{-p} m_{2p}^{\rm G^*}(N,\kappa)$.   
   Setting $\kappa = \alpha/N$, dividing by $N$ and taking $N \to \infty$, it follows
  \begin{equation}\label{w3d}    
   m_{2p}^{\rm G^*}(\alpha) = (-1)^{p+1} (2^p /\alpha) \lim_{N \to \infty}  m_{2p}^{\rm G}(-\alpha,N/\alpha)
   =  (-1)^{p+1} (2^p /\alpha) m_{2p}^{0,\rm G} \Big |_{N \mapsto - \alpha},
  \end{equation}
  in agreement with (\ref{w3}). We remark that in  \cite{DS15} this duality between high and low temperature for the
  Gaussian $\beta$-ensemble has been used to study the density of states in the scaled high temperature limit from
  knowledge of underlying structures at zero temperature. \\
2.~The
   solution of the differential equation (\ref{s2}) with $-N = \alpha$ is \cite{ABG12} (see \cite{AB19} for a 
   generalisation)
   \begin{equation}\label{w5} 
   W_1^{0, \rm G^*}(x;\alpha) = {x \over 2 \alpha} - {1 \over \alpha} {d \over d x} \log D_{-\alpha}(ix),
  \end{equation}
  where $D_{-\alpha}(z)$ is the so-called parabolic cylinder function. Now (see e.g.~\cite{Wu19})
  $$
  D_N(z) = e^{- z^2/4} 2^{-N/2} H_N(z/ \sqrt{2})
  $$
  and so
  $$
  W_1^{0, \rm G^*}(ix;\alpha) \Big |_{\alpha = - N} = {1 \over i N} {d \over dx} \log H_N(x/\sqrt{2}).
  $$
  With knowledge of (\ref{w1}), this reclaims (\ref{r1}).   
    \end{remark}
  
 \subsection{High-low temperature duality of $W_n^{\rm G}$}\label{S2.3}  
Our aims in this subsection are to prove Proposition \ref{p1.1}, and to proceed to
deduce from the proposition generalisations of (\ref{w1}) and (\ref{w4+}).

\smallskip
\noindent {\it Proof of Proposition \ref{p1.1}.} \: \: 
For $n \ge 2$ the $n$-th loop equation for the Gaussian $\beta$-ensemble with weight $e^{-\beta x^2/2}$,
and thus the generalisation of (\ref{L1}) which corresponds to the case $n=1$,
is (see the references listed in relation to (\ref{L1}))
\begin{align}
0&=\left[(\kappa-1)\frac{\partial}{\partial x_1}- 2 \kappa x_1\right]\overline{W}_n(x_1,J_n;N,\kappa)\nonumber
\\&\quad+ \sum_{k=2}^n\frac{\partial}{\partial x_k}\left\{\frac{\overline{W}_{n-1}(x_1,\ldots,\hat{x}_k,\ldots,x_n;N,\kappa)-\overline{W}_{n-1}(J_n;N,\kappa)}{x_1-x_k}\right\}\nonumber
\\&\quad+\kappa\left[\overline{W}_{n+1}(x_1,x_1,J_n;N,\kappa)+\sum_{J\subseteq J_n}\overline{W}_{|J|+1}(x_1,J;N,\kappa)\overline{W}_{n-|J|}(x_1,J_n\setminus J;N,\kappa)\right].\label{3.10}
\end{align}
Here the notation $\hat{x}_k$ indicates that the variable $x_k$ is not present in the argument, and thus $\overline{W}_{n-1}(x_1,\ldots,\hat{x}_k,\ldots,x_n;N,\kappa) =
\overline{W}_{n-1}( \{x_j\}_{j=1}^n \setminus \{x_k \};N,\kappa)$. Also for $n \ge 2$, $J_n=(x_2,\ldots,x_n)$, while $J_1=\emptyset$.
In (\ref{3.10}) change variables $x_j \mapsto i \sqrt{\kappa} x_j$, replace $\kappa$ by $\kappa^{-1}$ and $N$ by $\tilde{N}$ where
$\tilde{N}$ is arbitrary. Then with
$$
\tilde{W}_n^{\rm G}(x_1,\dots,x_j;N,\kappa) = \overline{W}_n^{\rm G}(i \sqrt{\kappa} x_1,\dots, i \sqrt{\kappa} x_j; \tilde{N}, \kappa^{-1})
$$
we see that $\{ \tilde{W}_j^{\rm G} \}$ satisfy 
\begin{align}
0&=i \left[(\kappa-1)\frac{\partial}{\partial x_1}- 2 \kappa x_1\right]\tilde{W}_n(x_1,J_n; \tilde{N},\kappa)\nonumber
\\&\quad-  \kappa^{1/2} \sum_{k=2}^n\frac{\partial}{\partial x_k}\left\{\frac{\hat{W}_{n-1}(x_1,\ldots,\hat{x}_k,\ldots,x_n; \tilde{N},\kappa)-\tilde{W}_{n-1}(J_n;\tilde{N},\kappa)}{x_1-x_k}\right\}\nonumber
\\&\quad+\kappa^{1/2}  \left[\tilde{W}_{n+1}(x_1,x_1,J_n;\tilde{N},\kappa)+\sum_{J\subseteq J_n}\hat{W}_{|J|+1}(x_1,J;\tilde{N},\kappa)\hat{W}_{n-|J|}(x_1,J_n\setminus J; \tilde{N},\kappa)\right].\label{3.10+}
\end{align}
Now substituting
$$
\tilde{W}_n^{\rm G}(x_1,\dots,x_j;\tilde{N},\kappa)  = (i \sqrt{\kappa})^{n}  \hat{W}_n^{\rm G}(x_1,\dots,x_j;\tilde{N},\kappa) 
$$
shows that the equations satisfied by $ \hat{W}_n^{\rm G}$ are precisely (\ref{3.10}) for each $n=2,3,\dots$.

We can check too that with the same change of variables and mappings of the previous paragraph, the
loop equation for $n=1$ (\ref{L1}) is similarly transformed to its original form, provided $\tilde{N} = - \kappa N$.
Thus all the loop equations are invariant under the mapping implied by the right hand side of (\ref{1.4e}).
Since upon a $1/N$ expansion, the validity of which is rigorously established in \cite{BG12},  the loop equations
uniquely determine $\{ W_n^{\rm G} \}$, the functional equation (\ref{1.4e}) has been validated. \hfill $\square$

\smallskip

We can use (\ref{1.4e}) to deduce the analogue of (\ref{w1}), relating the low temperature
$n$-point quantity $W_n^{0, \rm G}$ in the $\beta \to \infty$ expansion
\begin{equation}\label{2.29m}
\overline{W}_n^{\rm G}(x_1,\dots, x_n;N,\kappa) = {1 \over \kappa^{n-1}} W_n^{0, \rm G}(x_1,\dots, x_n;N,\kappa) + \cdots,
\end{equation}
(cf.~(\ref{L3}); the derivation involving the Laplace method of asymptotic expansion is the same, making essential use of the
structure of the $\overline{W}_n^{\rm G}$ when expressed in terms of the averages
on the RHS of (\ref{1.4c+}) as noted below (\ref{1.16a})) to the high temperature $n$-point quantity $W_n^{0, \rm G^*}$ in the $N \to \infty$ expansion
\begin{equation}\label{2.29a}
{1 \over N} \overline{W}_n^{\rm G^*}(x_1,\dots,x_n;N,\kappa)) \Big |_{\kappa =  \alpha / N}  =
{W}_n^{0, \rm G^*}(x_1,\dots,x_n;\alpha) + \cdots
\end{equation}
(cf.~(\ref{s3})).  Noting the analogues of (\ref{1.16a})
\begin{equation}\label{x1}
 {W}_n^{0, \rm G}(x_1,\dots,x_n;N) =
 {1 \over x_1 \cdots x_n} \sum_{p_1,\dots,p_n = 0}^\infty {\mu_{p_1,\dots,p_n}^{0, \rm G}(N) \over
 x_1^{p_1} \cdots x_n^{p_n}}.
 \end{equation}
 and
 \begin{equation}\label{x2}
 {W}_n^{0, \rm G^*}(x_1,\dots,x_n; \alpha) =
 {1 \over x_1 \cdots x_n} \sum_{p_1,\dots,p_n = 0}^\infty {\mu_{p_1,\dots,p_n}^{0, \rm G^*}(\alpha) \over
 x_1^{p_1} \cdots x_n^{p_n}},
 \end{equation}
 this can be equivalently be written as a generalisation of (\ref{w3}).

\begin{corollary}
We have
\begin{equation}\label{w1j}
 -\Big ( {-i \over  \sqrt{2} }  \Big )^n {1 \over N} W_n^{0, \rm G}(ix_1/\sqrt{2},\dots,ix_n/\sqrt{2};N) 
   = W_n ^{0, \rm G^*}(x_1,\dots,x_n; - N).
   \end{equation}
   Equivalently
\begin{equation}\label{w1k}  
{ (- \sqrt{2} )^{p_1 + \cdots p_n} \over N}
\mu^{0, \rm G}(p_1,\dots,p_n;N) =
\mu^{0, \rm G^*}(p_1,\dots,p_n;-N).
 \end{equation}
\end{corollary}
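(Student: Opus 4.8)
The plan is to read off both identities from the functional equation (\ref{1.4e}) of Proposition \ref{p1.1}, by inserting on its left-hand side the low temperature expansion (\ref{2.29m}) and on its right-hand side the high temperature expansion, and then matching leading orders. Since (\ref{1.4e}) is phrased entirely in terms of the $e^{-\beta x^2/2}$ ensemble quantity $\overline{W}_n^{\rm G}$, whereas $W_n^{0,\rm G^*}$ refers to the fixed-weight ($e^{-x^2/2}$) ensemble, one preliminary step is needed: the rescaling $\lambda_j\mapsto\lambda_j/\sqrt{\beta}$ carries $e^{-\beta\lambda^2/2}$ to $e^{-\lambda^2/2}$ with the Vandermonde factor contributing only an overall constant, and since $\sum_j 1/(x-\lambda_j/\sqrt\beta) = \sqrt\beta\sum_j 1/(\sqrt\beta x-\lambda_j)$ this gives, with $\beta=2\kappa$,
\[
\overline{W}_n^{\rm G}(x_1,\dots,x_n;N,\kappa) = (2\kappa)^{n/2}\,\overline{W}_n^{\rm G^*}\big((2\kappa)^{1/2}x_1,\dots,(2\kappa)^{1/2}x_n;N,\kappa\big).
\]

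Applying this on the right of (\ref{1.4e}), where the parameter in the role of $\kappa$ is $\kappa^{-1}$, converts the arguments $i\sqrt{\kappa}\,x_j$ into $i\sqrt{2}\,x_j$ and produces a factor $(2/\kappa)^{n/2}$, so that (\ref{1.4e}) becomes
\[
\overline{W}_n^{\rm G}(x_1,\dots,x_n;N,\kappa) = (-i\sqrt{2})^n\,\kappa^{-n}\,\overline{W}_n^{\rm G^*}\big(i\sqrt{2}\,x_1,\dots,i\sqrt{2}\,x_n;-\kappa N,\kappa^{-1}\big).
\]
In $\overline{W}_n^{\rm G^*}(\,\cdot\,;-\kappa N,\kappa^{-1})$ the product of the two parameters is $-N$, so with $N':=-\kappa N$ (analytically continued; recall each $\overline{W}_n$ is polynomial in the number of particles) and $\alpha:=-N$ held fixed this is the high temperature scaling $\kappa^{-1}=\alpha/N'$, whence $\overline{W}_n^{\rm G^*}(\,\cdot\,;-\kappa N,\kappa^{-1}) = -\kappa N\,W_n^{0,\rm G^*}(\,\cdot\,;-N)+O(1)$. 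Comparing with the left-hand side, which by (\ref{2.29m}) equals $\kappa^{-(n-1)}W_n^{0,\rm G}(x_1,\dots,x_n;N)+O(\kappa^{-n})$, and equating coefficients of $\kappa^{-(n-1)}$ yields $W_n^{0,\rm G}(x_1,\dots,x_n;N)=-N(-i\sqrt2)^n\,W_n^{0,\rm G^*}(i\sqrt2\,x_1,\dots,i\sqrt2\,x_n;-N)$.

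It then remains to substitute $x_j\mapsto ix_j/\sqrt2$ and to use the reflection symmetry $\lambda_j\mapsto-\lambda_j$ enjoyed by the Gaussian weight, which gives $W_n^{0,\rm G^*}(-x_1,\dots,-x_n;\alpha)=(-1)^n W_n^{0,\rm G^*}(x_1,\dots,x_n;\alpha)$; this rearranges the previous display into (\ref{w1j}), which for $n=1$ is exactly (\ref{w1}) of Corollary \ref{c2.3}. The coefficient form (\ref{w1k}) then follows by inserting the expansions about infinity (\ref{x1}) and (\ref{x2}) into (\ref{w1j}) and comparing coefficients of like powers of $x_1,\dots,x_n$; the same reflection symmetry forces $\mu^{0,\rm G}(p_1,\dots,p_n;N)$ to vanish unless $p_1+\cdots+p_n$ is even, which is what renders the resulting constant a power of $2$.

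The substance of the argument is bookkeeping rather than conceptual: one must correctly propagate the factors $i$, $\sqrt2$, $\sqrt\kappa$ and the accompanying signs through the change of variables, the weight rescaling, and the continuation $N\mapsto-\kappa N$, and check that the reflection symmetry supplies precisely the sign needed to reach the stated normalisations. The one point requiring a little care is that two a priori different expansions are being matched term by term — in $1/\kappa$ on the left and in $1/N'$ on the right — which is legitimate because the moment coefficients depend polynomially on $N$ and on $\kappa^{-1}$ and because of the rigorous $1/N$ expansion input already invoked in the proof of Proposition \ref{p1.1}.
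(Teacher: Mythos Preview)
Your proof is correct and follows essentially the same route as the paper: combine the rescaling identity between $\overline{W}_n^{\rm G}$ and $\overline{W}_n^{\rm G^*}$ with the functional equation (\ref{1.4e}), then extract the leading term of the low/high temperature expansions and equate. The only cosmetic difference is that the paper applies the rescaling \emph{before} invoking (\ref{1.4e}) and then sets $\kappa=\alpha/N$, $N\to\infty$, which lands directly on arguments $ix_j/\sqrt{2}$ and so avoids the final substitution $x_j\mapsto ix_j/\sqrt{2}$ and the appeal to reflection symmetry that your ordering requires.
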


\begin{proof}
Noting from the definitions that
\begin{equation}\label{DLk}
 \overline{W}_n^{\rm G^*}(x_1,\dots,x_n;N,\kappa)  = (\sqrt{2 \kappa})^{-n}   \overline{W}_n^{\rm G}(x_1/\sqrt{2 \kappa},\dots, x_n/\sqrt{2 \kappa};N,\kappa),
  \end{equation}
 it follows from  (\ref{1.4e}) that
 $$
 \overline{W}_n^{\rm G^*}(x_1,\dots,x_n;N,\kappa) = 
 (-i/\sqrt{2} \kappa)^n  \overline{W}_n^{\rm G}(i x_1/\sqrt{2 },\dots,ix_n/\sqrt{2};-\kappa N, \kappa^{-1}).
 $$
 Hence, substituting for $\overline{W}_n^{\rm G^*}$ in (\ref{2.29a}) it
follows
 \begin{align*}
  \overline{W}_n^{0, \rm G^*}(x_1,\dots,x_n;\alpha) & =
{1 \over \alpha} \Big ( {-i \over \sqrt{2}} \Big )^n  \lim_{N \to \infty}  (N/ \alpha)^{n-1}
 \overline{W}_n^{\rm G}(i x_1/ \sqrt{2},\dots,ix_n/\sqrt{2};-\alpha,N/\alpha) \\
& = {1 \over \alpha} \Big ( {-i \over \sqrt{2}} \Big )^n  W_n^{0, \rm G}(ix_1/\sqrt{2},\dots,ix_n/\sqrt{2};-\alpha),
\end{align*}
where the second equality follows from (\ref{2.29m})
Setting $- \alpha = N$ gives (\ref{w1j}).

The equation (\ref{w1k}) now follows by substituting the expansions (\ref{x1}) and (\ref{x2}) in
(\ref{w1j}) and equating coefficients of like powers.
\end{proof}

\begin{remark}
1.~The functional equation (\ref{1.4e}) also has consequence for the expansion coefficients
$\{ \mu^{\rm G}_{p_1,\dots,p_n}(N,\kappa) \}$ in (\ref{1.16a}), by providing a generalisation of
(\ref{w4+}).  Thus we see
\begin{equation}\label{w4+1} 
  \mu_{p_1,\dots,p_n}^{\rm G}(N,\kappa) =  (1/i \sqrt{\kappa})^{2n + \sum_{k=1}^n p_k}
      \mu_{p_1,\dots,p_n}^{\rm G}(-\kappa N,\kappa^{-1} ).
  \end{equation}
  
  \noindent
  2.~A coupled recurrence scheme to compute $\{\mu ^{0, \rm G^*}(p_1,p_2;\alpha) \}$
  has been given in \cite{FM21} (the coupling is with $\{\mu ^{0, \rm G^*}(p;\alpha) \}$).
  In light of (\ref{w1k}) this implies an analogous computation  of $\{\mu ^{0, \rm G}(p_1,p_2;N) \}$.
  
   \noindent
  3.~We can check from (\ref{DLk}) that the functional equation (\ref{1.4e}) is formally unchanged
  by use of the weight $e^{-x^2/2}$ corresponding to the Gaussian $\beta$-ensemble denoted by
  W${}^*$ and thus
  \begin{equation}\label{1.4e+} 
 \overline{W}_n^{\rm G^*}(x_1,\dots,x_n; N, \kappa)
 =  ( - i /   \sqrt{\kappa })^n    \overline{W}_n^{\rm G^*}(i  \sqrt{\kappa} x_1,\dots, i  \sqrt{\kappa}  x_n ; - \kappa N, \kappa^{-1}).
\end{equation}
   \end{remark}

  \section{High-low temperature duality for the Laguerre $\beta$-ensemble}\label{S3} 
   \subsection{The zero temperature limit of $W_1^{\rm L}$}
The considerations of the previous section can be extended to the other classical
   $\beta$-ensembles, namely those with the Laguerre and the Jacobi weights.
   In the Laguerre case, we will denote by "L" the $\beta$-ensemble with weight
   $x^{ \beta a/2} e^{-\beta x/2} \chi_{x > 0}$. We know from \cite{FRW17} that the
   first loop equation is
   \begin{equation}\label{Lc1}
   \Big [ (\kappa - 1) {d \over d x} +
   \Big ( {a \kappa \over x} - \kappa \Big ) \Big ] \overline{W}_1^{\rm L}(x;N,a,\kappa) + {N \kappa \over x} +
   \kappa \Big [ \overline{W}_2^{\rm L}(x,x;N,a,\kappa)) + \overline{W}_1^{\rm L}(x;N,a,\kappa))^2 \Big ] = 0,
   \end{equation}
   where as in the previous section $\kappa = \beta/2$.
 Introducing low temperature expansions of the form
 (\ref{L3}) shows that in the low temperature limit (\ref{Lc1}) reduces  to the Riccati type
 equation for the quantity $ {W}_1^{0,\rm L}(x;N,a)$,
 \begin{equation}\label{Lc2}
  \Big [  {d \over d x} +
   \Big ( {a  \over x} - 1 \Big ) \Big ] {W}_1^{0, \rm L}(x;N,a) + {N \over x} +
   (W_1^{0,\rm L}(x;N,a))^2 = 0,
   \end{equation}
  which can readily be solved in terms of Laguerre polynomials.
  
    \begin{proposition}\label{p3.1}
  We have
   \begin{align}\label{2.14L}
    & W_1^{0, \rm L}(x;N,a)  \nonumber \\ & \quad = {d \over d x} \log L_N^{a-1}(x) = {N \over x} + {d \over d x} \Big ( 1 + N! (N+a-1)!  \sum_{m=1}^\infty {(-x)^{- m} \over m! (N + a -1-  m)! (N - m)!} \Big ) \nonumber \\
    & \quad = {N \over x} + {N (N + a-1) \over x^2}  \bigg (
    1 + {( 2 N + a-2 ) \over x} + { (5 N^2 + (-11+5 a)N +  a ^2 - 5a + a^2) \over  x^2} +  \cdots  \bigg )
   \end{align}
   Furthermore, writing 
   \begin{equation}\label{2.14aL}  
     W_1^{0, \rm L}(x;N,a) = {1 \over x} \sum_{p=0}^\infty {m_{p}^{0, \rm L}(N,a) \over x^{p} }
     \end{equation}
     as is consistent with (\ref{v1}) and (\ref{v2}), we have that $\{  m_{p}^{0, \rm L} \}$ satisfy the recurrence
   \begin{equation}\label{2.15L}  
    m_{k+1}^{0, \rm L}(N,a)  = \sum_{s=0}^{k} m_{s}^{0, \rm L}(N,a)  m_{k-s}^{0, \rm L}(N,a)  - (k + 1-a) m_{k}^{0, \rm L}(N,a), \qquad m_0^{0, \rm L}(N,a) = N,
   \end{equation}
   valid for $k=0,1,\dots$.
   \end{proposition}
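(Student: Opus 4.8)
The plan is to follow the proof of Proposition~\ref{p2.1} line for line, with the Hermite equation replaced by the Laguerre one. First I would set $W_1^{0,\rm L}(x;N,a)=\frac{d}{dx}\log f(x)$, with $f(x)\sim c\,x^{N}$ as $x\to\infty$ for some constant $c$, so that $\frac{d}{dx}W_1^{0,\rm L}=f''/f-(f'/f)^{2}$. Substituting this into the Riccati equation~(\ref{Lc2}) the quadratic term cancels against $(W_1^{0,\rm L})^{2}$, and after multiplying through by $x f(x)$ one is left with the linear second-order equation $x f''(x)+(a-x)f'(x)+N f(x)=0$. This is exactly the differential equation obeyed by the generalised Laguerre polynomial $L_N^{a-1}(x)$ (equivalently, by the terminating Kummer function ${}_{1}F_{1}(-N;a;x)$). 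The required boundary condition $f(x)\sim c\,x^{N}$ is forced, just as below~(\ref{2.16}) in the Gaussian case, by the large-$x$ behaviour $W_1^{0,\rm L}(x;N,a)=N/x+O(1/x^{2})$ implied by~(\ref{v1})--(\ref{v2}); since the second, linearly independent solution of this Laguerre equation grows like $e^{x}x^{-(N+a)}$ at $+\infty$ (and for non-integral $a$ also carries an $x^{1-a}$ branch point at the origin), its logarithmic derivative cannot have the decay $N/x$, and hence $f=c\,L_N^{a-1}$. This gives the first equality in~(\ref{2.14L}).

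Next, for the explicit inverse-power series in~(\ref{2.14L}) I would start from the closed form $L_N^{a-1}(x)=\sum_{k=0}^{N}(-1)^{k}\binom{N+a-1}{N-k}x^{k}/k!$, re-expand it in descending powers of $x$ in the form $L_N^{a-1}(x)=\frac{(-1)^{N}}{N!}\,x^{N}\big(1+N!\,(N+a-1)!\sum_{m\ge1}\frac{(-x)^{-m}}{m!\,(N+a-1-m)!\,(N-m)!}\big)$, and take the logarithmic derivative: the factor $x^{N}$ produces the $N/x$ term and the bracketed series produces the displayed tail, whose leading coefficients collapse to the stated polynomials in $N$ and $a$. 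Finally, for the recurrence~(\ref{2.15L}) I would substitute the expansion~(\ref{2.14aL}) directly into~(\ref{Lc2}) and equate coefficients of $1/x^{k+2}$: the square $(W_1^{0,\rm L})^{2}$ supplies the Cauchy convolution $\sum_{s=0}^{k}m_{s}^{0,\rm L}m_{k-s}^{0,\rm L}$, the derivative term supplies $-(k+1)m_{k}^{0,\rm L}$, the $(a/x)W_1^{0,\rm L}$ term supplies $a\,m_{k}^{0,\rm L}$, and the $-W_1^{0,\rm L}$ term supplies $-m_{k+1}^{0,\rm L}$, while the coefficient of $1/x$ fixes $m_{0}^{0,\rm L}=N$; rearranging gives~(\ref{2.15L}). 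This part is pure bookkeeping.

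The only step that is not purely mechanical is the selection of the polynomial solution, i.e.\ excluding the second solution of the Laguerre equation. For $a>0$, the natural parameter range in which the weight $x^{\beta a/2}e^{-\beta x/2}$ is integrable at the origin, this follows from the standard asymptotics of confluent hypergeometric functions. For general $a$ one can instead appeal to the Stieltjes-type equilibrium characterisation — the analogue for the Laguerre weight of the statement recalled below~(\ref{0.2}) — which identifies $W_1^{0,\rm L}(x;N,a)=\sum_{j=1}^{N}1/(x-z_{j})$ with $z_{1},\dots,z_{N}$ the zeros of $L_N^{a-1}$, so that $f\propto L_N^{a-1}$ directly. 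Either way this is no harder than in the Gaussian case.
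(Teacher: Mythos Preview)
Your proposal is correct and follows essentially the same approach as the paper: the substitution $W_1^{0,\rm L}=(\log f)'$ into the Riccati equation (\ref{Lc2}) to obtain the Laguerre ODE $x f''+(a-x)f'+Nf=0$, identification of $f\propto L_N^{a-1}$ via the boundary condition $f\sim c x^N$, and substitution of (\ref{2.14aL}) into (\ref{Lc2}) to extract the moment recurrence. You supply more detail than the paper (the exclusion of the non-polynomial solution, the explicit descending-power expansion of $L_N^{a-1}$, and the term-by-term accounting in the recurrence), but the route is the same.
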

   
   \begin{proof}
  Setting
   $$
    W_1^{0, \rm L}(x;N,a) = {d \over d x} \log  f(x), \qquad f(x) \mathop{\sim}\limits_{x \to \infty} c x^N,
   $$
   where $c$ is a nonzero constant, we see from (\ref{Lc2}) that $f$ satisfies the linear second order differential equation
  \begin{equation}\label{2.16L}   
  x f''(x) + (a - x)  f'(x) +  N f(x) = 0.
  \end{equation}
   Up to a proportionality constant, the unique solution of this equation satisfying the required boundary condition is the Laguerre polynomial $L_N^{a-1}(x)$.
  The recurrence (\ref{2.15L}) follows by substituting (\ref{2.14aL}) in (\ref{Lc2}) and equating like powers of $1/x$.
  \end{proof}	
  
  \begin{remark}
  1.~It follows from \cite[Prop.~3.11]{FRW17} (see also \cite{MRW15}) that with Laguerre weight $x^{ \beta a/2} e^{-\beta x/2}$, the corresponding
  moments of the density $m^{(L)}(N,a,\kappa)$ are, for low order, given by
  \begin{align}\label{Lq}
  {1 \over N} m_1^{(L)}(N,a,\kappa) & = N + {1 \over \kappa} (1 - \kappa + \kappa a) \nonumber \\
  {1 \over N} m_2^{(L)}(N,a,\kappa) & = 2 N^2 + {N \over \kappa} (4 - 4 \kappa + 3 a \kappa) +  {1 \over \kappa^2} (2 - 4 \kappa + 2 \kappa^2 + 3 a \kappa - 3 a \kappa^2 + \kappa^2 a^2 ).
  \end{align}
  Expanding with respect to $1/\kappa$, we see the leading terms are the coefficients of $1/x^2$ and $1/x^3$, as is consistent with (\ref{2.14L}). \\
  2.~Writing the PDF for the Laguerre $\beta$-ensemble in Boltzmann factor form
  $$
  e^{-\beta U^{\rm L}}, \qquad U^{\rm L} :=  \sum_{l=1}^N  ( x_l - a  \log x_l) - \sum_{1 \le j < k \le N} \log | x_j - x_k|,
  $$
  from a limiting case of (\ref{BfJ}) below, as considered by Stieltjes in \cite{St85b}, we know that the minimum of $U^{\rm L} $ is unique and
  occurs at the zeros of the Laguerre polynomial $L_N^{a-1}(x)$. Arguing as in the derivation of (\ref{r1}) gives an alternative derivation 
  of the first equality in (\ref{2.14L}).
  \end{remark}
  
   \subsection{High-low temperature duality of $W_1^{\rm L}$}\label{S3.2}
   The recent work \cite{FM21} considered the Laguerre $\beta$-ensemble with weight
   $x^a e^{-x} \chi_{x > 0}$ (to distinguish this from the Laguerre weight of the previous subsection, the
   notation "L${}^*$" will be used) in the scaled high temperature limit obtained by setting $\kappa = \alpha / N$
   and taking $N \to \infty$.
   Writing
    \begin{equation}\label{s3L} 
{1 \over N}  \left\langle\sum_{j=1}^N\frac{1}{x-\lambda_j}\right\rangle^{\rm L^*}
\bigg |_{\kappa =  \alpha / N} = W_1^{0, \rm L^*}(x;\alpha,a) + {1 \over N} W_1^{1, \rm L^*}(x;\alpha,a) + \cdots
  \end{equation}  
  it was shown that $W_1^{0, \rm L^*}$ satisfies the differential equation
   \begin{equation}\label{Lc2+}
  \Big [  - {d \over d x} +
   \Big ( {a  \over x} - 1 \Big ) \Big ] {W}_1^{0, \rm L^*}(x;\alpha,a) + {1 \over x} +
 \alpha  (W_1^{0,\rm L^*}(x;\alpha,a))^2 = 0.
   \end{equation}
   
   We see that upon the substitutions $x=-y$, $W_1^{0, \rm L^*}(x;N,a) = - N g(y)$
   in (\ref{Lc2+}) that $g(y)$ so specifed satisfies precisely the differential equation (\ref{Lc2}), except that
   $a \mapsto -a$ and $\alpha = - N$. 
   Considering the requirement of the behaviour at infinity we conclude
    \begin{equation}\label{WS1}
    - {1 \over N} W_1^{0,\rm L}(-x;N,a) = W_1^{0,\rm L^*}(x;-N,-a).
     \end{equation}
   In fact we know from \cite{ABMV13} (see also \cite{FM21}) that
   \begin{equation}\label{Lc3} 
  W_1^{0, \rm L^*}(x;\alpha,a) = - {1 \over \alpha} {d \over dx} \log \Big ( x^{-a/2} e^{-x/2} W_{-\alpha-a/2,(1+a)/2}(-x) \Big ),
  \end{equation}
  where $W_{\mu,\kappa}(z)$ denotes the Whittaker function. Substituting this in the right hand side of 
  (\ref{WS1}), and substituting the first equality of (\ref{2.14L}) for the left hand side shows that we must have
    \begin{equation}\label{Lc4} 
  L_N^{a-1}(x) = C e^{x/2} x^{a/2} W_{N + a/2, (1 - a)/2}(x)
   \end{equation}
   for some constant $C$. This is a known identity \cite[Eq.~(13.18.17) and the general fact $W_{\mu,\kappa}(z)
   = W_{\mu,-\kappa}(z)$]{DLMF}.
  Also,  writing
  \begin{equation}\label{w2L}
  W_1^{0, \rm L^*}(x;\alpha, a) = {1 \over x} \sum_{p=0}^\infty { m_{p}^{0, \rm L^*}(\alpha,a) \over x^{p}},
   \end{equation}
   it follows from (\ref{2.14aL}) and (\ref{WS1}) that
   \begin{equation}\label{w3L}  
   {(-1)^p \over N } m_{p}^{0, \rm L}(N,a) = m_{p}^{0, \rm L^*}(-N,-a).
   \end{equation}
   Substituting in (\ref{2.15L}), this implies
    \begin{equation}\label{2.15L+}  
    m_{k+1}^{0, \rm L^*}(-N,-a)  = - N \sum_{s=0}^{k} m_{s}^{0, \rm L}(-N,-a)  m_{k-s}^{0, \rm L^*}(-N,-a)  + (k + 1-a) m_{k}^{0, \rm L^*}(-N,-a),
   \end{equation}
subject to the initial condition
   $m_0^{0, \rm L^*}(-N,-a) = 1$. With $(-N,-a)$ replaced by $(\alpha,a)$ this is known from \cite[Eq.~(4.13)]{FM21}.

  \subsection{High-low temperature duality of $W_n^{\rm L}$}\label{S3.3} 
  The equation (\ref{Lc1})  is the case $n=1$ of a hierarchy of loop equations which for
  $n \ge 2$ read \cite[Eq.~(3.9)]{FRW17}
  \begin{align}\label{eq:loop_laguerre}
0&=\left[(\kappa-1)\frac{\partial}{\partial x_1}+\left(\frac{\kappa a}{x_1}-\kappa \right)\right]\overline{W}_n^{\rm L}(x_1,J_n) \nonumber
\\&\quad+\sum_{k=2}^n\frac{\partial}{\partial x_k}\left\{\frac{\overline{W}_{n-1}^{\rm L}(x_1,\ldots,\hat{x}_k,\ldots,x_n)-\overline{W}_{n-1}^{\rm L}(J_n)}{x_1-x_k}+\frac{1}{x_1}\overline{W}_{n-1}^{\rm L}(J_n)\right\}\nonumber
\\&\quad+\kappa\left[\overline{W}_{n+1}^{\rm L}(x_1,x_1,J_n)+\sum_{J\subseteq J_n}\overline{W}_{|J|+1}^{\rm L}(x_1,J)\overline{W}_{n-|J|}^{\rm L}(x_1,J_n\setminus J)\right].
\end{align}
 Moreover, we know from \cite{FRW17} that these equations become triangular upon an appropriate $1/N$ expansion of each
 of the $ \overline{W}_{n}^{\rm L} $, and so uniquely determine the latter.
 Analogous to (\ref{1.16a}), introducing the expansion about infinity
 \begin{equation}\label{1.4fL} 
 \overline{W}_n^{\rm L}(x_1,\dots,x_n; N, \kappa,a) =
 {1 \over x_1 \cdots x_n} \sum_{p_1,\dots,p_n = 0}^\infty {\mu_{p_1,\dots,p_n}^{\rm L}(N,\kappa,a) \over
 x_1^{p_1} \cdots x_n^{p_n}},
 \end{equation}
 we can give a meaning to $ \overline{W}_n^{\rm L}$ for general $N,a,\kappa$, since it is known that the
 $\mu_{p_1,\dots,p_n}^{\rm L^*}$ are polynomials in $N,a,\kappa$; this latter point is illustrated by
 (\ref{Lq}) for $n=1$. With this understood, we can make use of the loop equations to derive a functional equation
 for the $ \overline{W}_n^{\rm L}$.
 
 \begin{proposition}\label{p3.1a}
For the Laguerre $\beta$-ensemble with weight $x^{\beta a/2} e^{-\beta x/2}$,
denote the quantity (\ref{1.4c+}) by $ \overline{W}_n^{\rm L}(x_1,\dots,x_n; N, \kappa,a)$
for $n \ge 2$, and similarly the quantity (\ref{1.4d}) in the case $n=1$.
Extend their meaning to general values of $N$ using (\ref{1.4fL}).
We have
\begin{equation}\label{1.4eL} 
 \overline{W}_n^{\rm L}(x_1,\dots,x_n; N, \kappa,a)
 =      \overline{W}_n^{\rm L}( -\kappa x_1,\dots,   -\kappa x_n ; - \kappa N, \kappa^{-1},- \kappa a).
\end{equation}
Equivalently
\begin{equation}\label{1.4eLp} 
\mu_{p_1,\dots,p_n}^{\rm L}(N,\kappa,a)  = (-\kappa)^{-n-\sum_{l=1}^n p_l} 
\mu_{p_1,\dots,p_n}^{\rm L}(-\kappa N,\kappa^{-1},- \kappa a). 
\end{equation}
\end{proposition}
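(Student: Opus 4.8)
The plan is to prove Proposition~\ref{p3.1a} by imitating the proof of Proposition~\ref{p1.1}, now with the Laguerre loop hierarchy in place of the Gaussian one. The inputs are the loop equations (\ref{Lc1}) (the case $n=1$) and (\ref{eq:loop_laguerre}) (for $n\ge 2$), together with the fact from \cite{FRW17} that upon the $1/N$ expansion this hierarchy becomes triangular and hence determines $\{\overline{W}_n^{\rm L}\}$ uniquely; equivalently, the coefficients $\mu_{p_1,\dots,p_n}^{\rm L}(N,\kappa,a)$ in (\ref{1.4fL}) are the unique polynomial solution of the recursion one gets by inserting (\ref{1.4fL}) into the hierarchy. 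It therefore suffices to verify that the functions $(x_1,\dots,x_n)\mapsto\overline{W}_n^{\rm L}(-\kappa x_1,\dots,-\kappa x_n;-\kappa N,\kappa^{-1},-a)$, regarded as functions of $x_1,\dots,x_n$, again satisfy the hierarchy with parameters $(N,\kappa,a)$; uniqueness then yields (\ref{1.4eL}).

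To do this I would take the $n$-th loop equation with parameters $(\tilde N,\kappa^{-1},\tilde a)$ in variables $z_1,\dots,z_n$ --- an identity for all values of its arguments --- and substitute $z_j=-\kappa x_j$. By the chain rule every $\frac{\partial}{\partial z_j}$ brings down a factor $-1/\kappa$, and $1/z_1$ and each $1/(z_1-z_k)$ bring down a factor $-1/\kappa$ as well, so that each term of the $n$-th equation acquires one and the same power of $\kappa$ and one and the same sign. After dividing out this common factor the equation again has the shape of (\ref{eq:loop_laguerre}) in the $x_j$, with $\overline{W}_m^{\rm L}(\cdots;\tilde N,\kappa^{-1},\tilde a)$ in the role of $\overline{W}_m^{\rm L}(\cdots;N,\kappa,a)$. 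Matching terms one by one then fixes everything: the first-order term $(\kappa-1)\frac{\partial}{\partial x_1}$ and the quadratic term $\kappa[\,\cdots\,]$ are compatible only when the variables are rescaled by exactly $-\kappa$, which is why $z_j=-\kappa x_j$ is the right substitution; the remaining term $-\kappa$ inside the first bracket is then automatically correct; comparison of the $x_1^{-1}$ coefficient of $\overline{W}_n$ forces the sign reversal in the weight parameter; and --- since $N$ appears explicitly only in the $n=1$ equation (\ref{Lc1}), via the term $N\kappa/x_1$ --- that equation forces $\tilde N=-\kappa N$. The one feature absent from the Gaussian computation is the extra Laguerre term $\frac{\partial}{\partial x_k}\{x_1^{-1}\overline{W}_{n-1}^{\rm L}(J_n)\}$ in the second line of (\ref{eq:loop_laguerre}); one checks that under $z_j=-\kappa x_j$ it picks up the same power of $\kappa$ ($\kappa^{-2}$, before the overall clearing) as its companion $\frac{\partial}{\partial x_k}\{(\overline{W}_{n-1}^{\rm L}(x_1,\dots,\hat x_k,\dots,x_n)-\overline{W}_{n-1}^{\rm L}(J_n))/(x_1-x_k)\}$, so the whole second line transforms consistently. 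As in the Gaussian proof, the $n=1$ equation must be treated separately from those with $n\ge 2$, and it is the $n=1$ equation that pins down $\tilde N$.

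Granting (\ref{1.4eL}), the equivalent statement (\ref{1.4eLp}) is then a formal consequence: substitute the large-$x$ expansion (\ref{1.4fL}) into both sides of (\ref{1.4eL}), use $(-\kappa x_j)^{-p_j-1}=(-\kappa)^{-p_j-1}x_j^{-p_j-1}$, and equate coefficients of $x_1^{-p_1-1}\cdots x_n^{-p_n-1}$. This parallels the deduction of (\ref{w4+1}) from (\ref{1.4e}) in the Gaussian case.

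I expect the main obstacle to be exactly the bookkeeping in the second paragraph: keeping the powers of $\kappa$, the signs coming from $z_j=-\kappa x_j$, the image of the weight parameter, and the value $\tilde N=-\kappa N$ all mutually consistent across every type of term and every $n$. The argument contains nothing deep, but this is where a slip is most likely, and it is also where the Laguerre case genuinely diverges from the Gaussian one, through the presence of the parameter $a$ and of the extra $x_1^{-1}\overline{W}_{n-1}$ term in the hierarchy.
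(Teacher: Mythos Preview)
Your proposal follows the same strategy as the paper's proof: show that the full hierarchy of Laguerre loop equations is invariant under the substitution $x_j\mapsto -\kappa x_j$, $(N,\kappa,a)\mapsto(-\kappa N,\kappa^{-1},-a)$, then invoke the uniqueness of $\{\overline{W}_n^{\rm L}\}$ from the triangular $1/N$ expansion of \cite{FRW17}, and finally read off (\ref{1.4eLp}) by inserting (\ref{1.4fL}) and equating coefficients. The paper's own proof is a two-sentence pointer to the argument of Proposition~\ref{p1.1}, so your write-up in fact supplies more of the bookkeeping (in particular the check that the extra $x_1^{-1}\overline{W}_{n-1}$ term transforms with the same overall factor as its companion, and the observation that $\tilde N$ is pinned down only by the $n=1$ equation) than the paper does.
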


\begin{proof}
As in the proof of Proposition \ref{p1.1}) the equation (\ref{1.4eL}) follows as an
invariance of the full set of loop equations. The identity (\ref{1.4eLp}) now follows 
upon substituting (\ref{1.4fL}) in (\ref{1.4eL}) and equating coefficients.
\end{proof}

 We can use Proposition \ref{p3.1a} to obtain a generalisation of (\ref{w1}) and (\ref{w1k}).
  For this purpose, analogous to (\ref{2.29m}) expand for $\kappa \to \infty$
  \begin{equation}\label{1.4eLq} 
   \overline{W}_n^{\rm L}(x_1,\dots,x_n; N, \kappa,a) = {1 \over \kappa^{n-1}}   {W}_n^{0,\rm L}(x_1,\dots,x_n; N,a) + \cdots
  \end{equation}
  and analogous to (\ref{2.29a}) also expand for $N \to \infty$
  \begin{equation}\label{1.4eLr}  
  {1 \over N}    \overline{W}_n^{\rm L^*}(x_1,\dots,x_n; N, \kappa,a) \Big |_{\kappa = \alpha/N}  = 
   {W}_n^{0,\rm L^*}(x_1,\dots,x_n; \alpha,a) + \cdots 
   \end{equation}  
  Further expand the terms on the right hand side of each of these,
  \begin{equation}\label{x1L}
 {W}_n^{0, \rm L}(x_1,\dots,x_n;N,a) =
 {1 \over x_1 \cdots x_n} \sum_{p_1,\dots,p_n = 0}^\infty {\mu_{p_1,\dots,p_n}^{0, \rm L}(N,a) \over
 x_1^{p_1} \cdots x_n^{p_n}}.
 \end{equation}
 and
 \begin{equation}\label{x2L}
 {W}_n^{0, \rm L^*}(x_1,\dots,x_n; \alpha,a) =
 {1 \over x_1 \cdots x_n} \sum_{p_1,\dots,p_n = 0}^\infty {\mu_{p_1,\dots,p_n}^{0, \rm L^*}(\alpha,a) \over
 x_1^{p_1} \cdots x_n^{p_n}}
 \end{equation}
(cf.~(\ref{x1}) and (\ref{x2})).

\begin{corollary}
We have
 \begin{equation}\label{Wh}
   - {1 \over N}  {W}_n^{0,\rm L}(x_1,\dots,x_n; N ,a) =   {W}_n^{0,\rm L^*}(-x_1,\dots,-x_n; -N ,-a) 
 \end{equation}
 and
\begin{equation}\label{Wh1}
 {1 \over N}  \mu_{p_1,\dots,p_n}^{0, \rm L}(N,a)  =  (-1)^{p_1+\dots+p_n+n-1} 
\mu_{p_1,\dots,p_n}^{0, \rm L^*}(-N,-a).  
 \end{equation}
 \end{corollary}  

\begin{proof}
Noting from the definitions that
$$
  \overline{W}_n^{\rm L^*}(x_1,\dots,x_n; N, \kappa,a)  = \kappa^{-n}   \overline{W}_n^{\rm L}(x_1/\kappa,\dots,x_n/\kappa; N, \kappa,a/\kappa),
  $$
  it follows from (\ref{1.4eL}) that
  $$
  \overline{W}_n^{\rm L^*}(x_1,\dots,x_n; N, \kappa,a)  = \kappa^{-n}
  \overline{W}_n^{\rm L}(-x_1,\dots,-x_n; - \kappa N, \kappa^{-1},-a).
  $$
 Setting $\kappa = \alpha/N$, dividing both sides by $N$ and
  taking the limit $N \to \infty$ on both sides using (\ref{1.4eLq}) and (\ref{1.4eLr}) shows
  $$
  \overline{W}_n^{\rm L^*}(x_1,\dots,x_n;  \alpha ,a)  =   {1 \over \alpha}  \overline{W}_n^{\rm L}(-x_1,\dots,-x_n; - \alpha, -a).
  $$
  Changing the sign of each of $(x_1,\dots,x_n;  \alpha ,a)$ in this equation, interchanging the role of the LHS and RHS
  and setting $\alpha = N$ gives  \ref{Wh}).
  Substituting (\ref{x1L}) and (\ref{x2L}) gives (\ref{Wh1}).
\end{proof}

 \section{High-low temperature duality for the Jacobi $\beta$-ensemble}\label{S4} 
   \subsection{The zero temperature limit of $W_1^{\rm J}$}
We will 
 denote by "J" the $\beta$-ensemble implied by the weight
   $x^{ \beta a/2}  (1 - x)^{\beta b/2}\chi_{0 < x  < 1}$. 
   Writing the eigenvalue PDF in Boltzmann factor form,
   as proportional  to
    \begin{equation}\label{BfJ}
   e^{-\beta U^{\rm J}}, \qquad U^{\rm J} :=  - \sum_{l=1}^N (a \log x_l + b  \log (1 - x_l ) )-\sum_{1 \le j < k \le N} \log | x_j - x_k|,
    \end{equation}
    we know from the work of Stieltjes \cite{St85b} that the minimum of $U^{\rm J}$ for $x_l \in (0,1)$ is unique and
    occurs at the zeros of the Jacobi polynomials $P_N^{(a-1,b-1)}(1-2x)$.
    As in the Gaussian and Laguerre cases above, this result can be reclaimed using
    a loop equation formalism.

   Thus we know from \cite{FRW17} that for the Jacobi $\beta$-ensemble the
   first loop equation is
  \begin{align}
0&=\left((\kappa-1)\frac{d}{d x_1}+\left(\frac{\kappa a}{x_1}-\frac{\kappa b}{1-x_1}\right)\right)\overline{W}_1^{0,\rm J}(x_1;N,a,b,\kappa)
+ \frac{1}{x_1(1-x_1)}\left[(\kappa a +\kappa b +1)N+\kappa N(N-1)\right] \nonumber
\\&\quad+\kappa\left[\overline{W}_{2}^{0,\rm J}(x_1,x_1;N,a,b,\kappa)+ (\overline{W}_{1}^{0,\rm J}(x_1;N,a,b,\kappa))^2 \right].\label{4.5J}
\end{align}
 Introducing low temperature expansions of the form
 (\ref{L3}) we see that in the low temperature limit (\ref{4.5J}) reduces  to a Riccati type
 equation for the quantity $ W_1^{0,\rm J}(x;N,a,b)$. This equation reads
 \begin{equation}
 0 =\left( \frac{d}{d x}+ \frac{ a}{x}-\frac{ b}{1-x} \right){W}_1^{0,\rm J}(x;N,a,b) 
+\frac{(a+b- 1)N + N^2 }{x(1-x)}
+({W}_{1}^{0,\rm J}(x;N,a,b))^2. \label{4.5a}
\end{equation}
 Solving (\ref{4.5a}) with an appropriate boundary condition  gives rise to Jacobi  polynomials.

    \begin{proposition}\label{p5.1}
  We have
   \begin{align}\label{2.14J}
    & W_1^{0, \rm J}(x;N,a,b) = {d \over d x} \log P_N^{(a-1,b-1)}(1-2x)   \nonumber \\  & \quad 
    = {N \over x} + {d \over d x} \bigg ( 1 + {N! \Gamma(a+N)  \over \Gamma(a+b+2N-1)}
    \sum_{m=1}^\infty {(-x)^{- m}\Gamma(a+b+2N-m-1) \over m! (N - m)! \Gamma(a+N-m)} \bigg ) \nonumber \\
    & \quad = {N \over x} + {N (N + a-1) \over (2N+a+b-2) x^2}   \nonumber \\  & \quad  \qquad + {N (N + a-1)
    (4 + a^2 + a (-4 + b) - 2 b  + (-7 + 3 a + 2 b) N + 3 N^2) \over (-3 + a + b + 2 N) (-2 + a + b + 2 N)^2 x^3} +   \cdots 
   \end{align}
   Furthermore, writing 
   \begin{equation}\label{2.14aJ}  
     W_1^{0, \rm J}(x;N,a,b) = {1 \over x} \sum_{p=0}^\infty {m_{p}^{0, \rm J}(N,a,b) \over x^{p} }
     \end{equation}
   we have that $\{  m_{p}^{0, \rm J} \}$ satisfy the recurrence
   \begin{equation}\label{2.15J}  
  m_{p}^{0, \rm J} = {1 \over 2N + a + b -1-p}  \Big ( (N+a-1 )N + b \sum_{s=1}^{p-1} m_{s}^{0,\rm J}  + N  \sum_{s=1}^{p-1} m_{s,0}^{0,\rm J} m_{p-s}^{0,\rm J} \Big )
    , \quad m_0^{0, \rm J} = N,
   \end{equation}
   valid for $p=1, 2, \dots$.
   \end{proposition}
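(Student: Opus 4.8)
The plan is to follow the same route used to establish Propositions \ref{p2.1} and \ref{p3.1}. First I would write $W_1^{0,\rm J}(x;N,a,b)=\frac{d}{dx}\log f(x)$, with the boundary condition $f(x)\sim c\,x^{N}$ as $x\to\infty$ ($c$ a nonzero constant), so that $f$ is singled out as the polynomial solution of the resulting linear equation. Since $\big(f'/f\big)'=f''/f-\big(f'/f\big)^{2}$, this substitution linearises the Riccati equation (\ref{4.5a}): it becomes $\frac{f''}{f}+\big(\frac{a}{x}-\frac{b}{1-x}\big)\frac{f'}{f}+\frac{(a+b-1)N+N^{2}}{x(1-x)}=0$, and multiplying through by $x(1-x)f$ and using $(a+b-1)N+N^{2}=N(N+a+b-1)$ yields
\begin{equation*}
x(1-x)\,f''(x)+\big(a-(a+b)x\big)f'(x)+N(N+a+b-1)\,f(x)=0 .
\end{equation*}
Under the change of variable $y=1-2x$ (so that $1-y^{2}=4x(1-x)$ and $\frac{dy}{dx}=-2$) this is exactly the Jacobi differential equation satisfied by $P_{N}^{(a-1,b-1)}(y)$; hence its unique polynomial solution of degree $N$ is, up to an irrelevant multiplicative constant, $f(x)=P_{N}^{(a-1,b-1)}(1-2x)$. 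Taking $\frac{d}{dx}\log$ of this then gives the first equality in (\ref{2.14J}). (Strictly, at $x=\infty$ the equation has regular singular exponents $-N$ and $N+a+b-1$, so I would phrase the boundary condition as selecting the solution whose asymptotic expansion is a pure series in $x^{N},x^{N-1},\dots$, equivalently the polynomial solution; the log-derivative of any other solution fails to have the requisite formal expansion.)

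For the explicit large-$x$ expansion in (\ref{2.14J}) I would insert the hypergeometric representation $P_{N}^{(a-1,b-1)}(1-2x)=\binom{N+a-1}{N}\,{}_2F_1(-N,\,N+a+b-1;\,a;\,x)$, reverse the order of the terminating sum so as to factor out its top term $\propto x^{N}$, and take the logarithmic derivative of the remaining factor $1+O(1/x)$. This is routine and reproduces the stated coefficients; matching it against the low-order exact moments (of the type recorded in (\ref{Lq}) for the Laguerre case) serves as a consistency check.

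The recurrence (\ref{2.15J}) is obtained by substituting the ansatz (\ref{2.14aJ}) directly into the Riccati equation (\ref{4.5a}) and equating coefficients of like powers of $1/x$. The one genuinely fiddly point — and the main obstacle — is the expansion of the rational coefficient functions: writing $\frac{1}{1-x}=-\sum_{k\ge 1}x^{-k}$ and $\frac{1}{x(1-x)}=\frac1x+\frac1{1-x}=-\sum_{k\ge 2}x^{-k}$, each of the terms $-\frac{b}{1-x}W_{1}^{0,\rm J}$, $\frac{a}{x}W_{1}^{0,\rm J}$ and $\frac{N(N+a+b-1)}{x(1-x)}$ contributes a full partial sum over the lower-order moments, which is the origin of the sums $\sum_{s=1}^{p-1}$ and of the $p$-dependent prefactor $2N+a+b-1-p$ multiplying $m_{p}^{0,\rm J}$. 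One must therefore track the coefficient of a fixed power $x^{-p-1}$ simultaneously through the derivative term, the quadratic (convolution) term $\sum_{s}m_{s}^{0,\rm J}m_{p-s}^{0,\rm J}$, and the three rational terms, and verify that the linear-in-$m_{p}^{0,\rm J}$ contributions assemble to exactly $2N+a+b-1-p$ (assumed nonzero, as holds for generic parameters). The seed $m_{0}^{0,\rm J}=N$ and the first nontrivial values should be read off and checked against the $1/x^{2}$ and $1/x^{3}$ coefficients already displayed in (\ref{2.14J}).
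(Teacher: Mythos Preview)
Your proposal is correct and follows essentially the same route as the paper: the substitution $W_1^{0,\rm J}=\frac{d}{dx}\log f$ with the boundary condition $f(x)\sim cx^{N}$, linearising (\ref{4.5a}) to obtain the Jacobi differential equation for $f$, identifying $f=P_N^{(a-1,b-1)}(1-2x)$, and then substituting (\ref{2.14aJ}) into (\ref{4.5a}) to derive the recurrence. You supply more detail than the paper (the explicit change of variable $y=1-2x$, the hypergeometric representation for the large-$x$ expansion, and the bookkeeping of the rational coefficient expansions for the recurrence), but the strategy is identical.
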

   
   \begin{proof}
  Let $c$ be a nonzero constant and set
   $$
    W_1^{0, \rm J}(x;N,a,b) = {d \over d x} \log  f(x), \qquad f(x) \mathop{\sim}\limits_{x \to \infty} c x^N.
   $$
 We see from (\ref{4.5a}) that $f$ satisfies the second order linear differential equation
  \begin{equation}\label{2.16J}   
  x(1-x) f''(x) + (a(1-x) - b x)  f'(x) +  ((a+b-1)N + N^2)  f(x) = 0.
  \end{equation}
  Up to a proportionality constant, the unique solution of this equation satisfying the required boundary condition is the Jacobi polynomial $P_N^{(a-1,b-1)}(1-2x)$.
  The recurrence (\ref{2.15J}) follows by substituting (\ref{2.14aJ}) in (\ref{4.5a}) and equating like powers of $1/x$.
  \end{proof}
  
  \subsection{High-low temperature duality of $W_1^{\rm J}$}\label{S4.2}
   The recent work \cite{FM21} considered the Jacobi $\beta$-ensemble with weight
   $x^a  (1 - x)^b\chi_{0 < x  < 1}$ (to distinguish this from the Jacobi weight of the previous subsection, the
   notation "J${}^*$" will be used) in the scaled high temperature limit $\kappa = \alpha / N$ and $N \to \infty$.
   Writing the analogue of the expansion (\ref{s3L}),
  it was shown that $W_1^{0, \rm J^*(x;\alpha,a,b}$ satisfies the differential equation
  \begin{equation}
		\label{eq:Jac_N_order}
		\left(-{d \over d x} +\frac{a}{x} -\frac{b}{1-x}\right)W_1^{0,\rm J^*}(x;\alpha,a,b) + \frac{1}{x(1-x)}\left(1+a+b+\alpha \right) + \alpha \left(W_1^{0,\rm J^*}(x;\alpha,a,b)\right)^2 = 0.
	\end{equation}
 
 Substituting  $W_1^{0,\rm J^*}(x) = -  g(x)/N$
   we see that $g(x)$ satisfies the differential equation (\ref{2.16J}), except that
   $a \mapsto -a, b \mapsto - b$ and $\alpha = - N$. 
   Considering the requirement of the behaviour at infinity we conclude
    \begin{equation}\label{WS1J}
    - {1 \over N} W_1^{0,\rm J^*}(x;N,a,b) = W_1^{0,\rm J^*}(x;-N,-a,-b).
     \end{equation}
 From \cite{FM21}, in terms of the Gauss hypergeometric function,  we have
   \begin{equation}\label{Jc3} 
  W_1^{0, \rm J^*}(x;\alpha,a,b) = - {1 \over \alpha} {d \over dx} \log \Big (
  x^{-\alpha}{}_2F_1\left( \alpha, \alpha + a +1, 2\alpha + a + b + 2; x^{-1} \right) \Big ).
  \end{equation}
  Substituting this along with the first equality in (\ref{2.14J}) shows
    \begin{equation}\label{Jc4} 
  P_N^{(a-1,b-1)}(1 - 2x) = C   x^{N}{}_2F_1\left( -N, - N  - a +1, -2N - a - b + 2; x^{-1} \right)
   \end{equation}
   for some constant $C$, which can readily be checked by equating like powers of $x$.

   \subsection{High-low temperature duality of $W_n^{\rm J}$}\label{S4.3} 
 The hierarchy of loop equations generalising (\ref{4.5}) for
  $n \ge 2$ reads \cite[Eq.~(4.6)]{FRW17}
  \begin{align}
0&=\left((\kappa-1)\frac{\partial}{\partial x_1}+\left(\frac{\kappa a}{x_1}-\frac{\kappa b}{1-x_1}\right)\right)\overline{W}_n^{\rm J}(x_1,J_n) \nonumber
\\&\quad -\frac{1}{x_1(1-x_1)}\sum_{k=2}^nx_k\frac{\partial}{\partial x_k}\overline{W}_{n-1}^{\rm J}(J_n)\nonumber
\\&\quad+ \sum_{k=2}^n\frac{\partial}{\partial x_k}\left\{\frac{\overline{W}_{n-1}^{\rm J}(x_1,\ldots,\hat{x}_k,\ldots,x_n)-\overline{W}_{n-1}^{\rm J}(J_n)}{x_1-x_k}+\frac{1}{x_1}\overline{W}_{n-1}^{\rm J}(J_n)\right\}\nonumber
\\&\quad+\kappa\left[\overline{W}_{n+1}^{\rm J}(x_1,x_1,J_n)+\sum_{J\subseteq J_n}\overline{W}_{|J|+1}^{\rm J}(x_1,J)\overline{W}_{n-|J|}^{\rm J}(x_1,J_n\setminus J)\right].\label{4.5}
\end{align}
These equations become triangular upon an appropriate $1/N$ expansion of each
 of the $ \overline{W}_{n}^{\rm J} $, and so uniquely determine the latter \cite{FRW17}.
 As in the Gaussian and Laguerre cases, introducing the expansion about infinity
 \begin{equation}\label{1.4fJ} 
 \overline{W}_n^{\rm J}(x_1,\dots,x_n; N, \kappa,a,b) =
 {1 \over x_1 \cdots x_n} \sum_{p_1,\dots,p_n = 0}^\infty {\mu_{p_1,\dots,p_n}^{\rm J}(N,\kappa,a,b) \over
 x_1^{p_1} \cdots x_n^{p_n}},
 \end{equation}
 allows for a meaning to $ \overline{W}_n^{\rm J}$ for general $N,a,b,\kappa$, since it is known that the
 $\mu_{p_1,\dots,p_n}^{\rm J}$ are rational functions in $N,a,b,\kappa$; see \cite{FRW17}.

 \begin{proposition}\label{p4.1}
For the Jacobi $\beta$-ensemble with weight $x^{\beta a/2} (1 - x)^{\beta b /2}$,
denote the quantity (\ref{1.4c+}) by $ \overline{W}_n^{\rm J}(x_1,\dots,x_n; N, \kappa,a,b)$
for $n \ge 2$, and similarly the quantity (\ref{1.4d}) in the case $n=1$.
Extend their meaning to general values of $N$ using (\ref{1.4fJ}).
We have
\begin{equation}\label{1.4eJ} 
 \overline{W}_n^{\rm J}(x_1,\dots,x_n; N, \kappa,a,b)
 =   (- \kappa)^{-n}   \overline{W}_n^{\rm J}(  x_1,\dots,   x_n ; - \kappa N, \kappa^{-1},-\kappa a,-\kappa b).
\end{equation}
Equivalently
\begin{equation}\label{1.4eJp} 
\mu_{p_1,\dots,p_n}^{\rm J}(N,\kappa,a,b)  = (-\kappa)^{-n} 
\mu_{p_1,\dots,p_n}^{\rm J}(-\kappa N,\kappa^{-1},-a,-b). 
\end{equation}
\end{proposition}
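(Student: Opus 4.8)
The plan is to argue exactly as for Propositions \ref{p1.1} and \ref{p3.1a}: the identity (\ref{1.4eJ}) will be exhibited as an invariance of the complete hierarchy of Jacobi loop equations (\ref{4.5}) (the first loop equation together with the members for $n\ge2$), after which the fact, due to \cite{FRW17}, that this hierarchy together with the $1/N$ expansion determines the family $\{\overline{W}_n^{\rm J}\}$ uniquely --- the $\overline{W}_n^{\rm J}$ for general $N,\kappa,a,b$ being specified through the rational continuation of the coefficients $\mu_{p_1,\dots,p_n}^{\rm J}$ in (\ref{1.4fJ}) --- forces the two sides of (\ref{1.4eJ}) to coincide.

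In detail, in the loop equations I would perform the replacements $\kappa\mapsto\kappa^{-1}$, $a\mapsto-a$, $b\mapsto-b$, $N\mapsto\tilde N$ with $\tilde N$ provisionally free, leaving the $x_j$ untouched, and put $\tilde W_n^{\rm J}(x_1,\dots,x_n):=\overline{W}_n^{\rm J}(x_1,\dots,x_n;\tilde N,\kappa^{-1},-a,-b)$, which then solves the loop equations with those transformed parameters. Substituting $\tilde W_n^{\rm J}=(-\kappa)^{n}\hat W_n^{\rm J}$ and comparing term by term, one checks that the system obeyed by $\{\hat W_n^{\rm J}\}$ reproduces, up to a common nonzero scalar on each equation, the original hierarchy (\ref{4.5}) for $\{\overline{W}_n^{\rm J}(\,\cdot\,;N,\kappa,a,b)\}$: the sign reversals $a\mapsto-a$, $b\mapsto-b$ are what reconcile the potential terms $\kappa a/x_1$ and $\kappa b/(1-x_1)$ after $\kappa\mapsto\kappa^{-1}$, the scalar generated by the operator $(\kappa-1)\partial_{x_1}$ together with the $\kappa$ in front of the last bracket of (\ref{4.5}) must agree, and the powers of $-\kappa$ coming from the bilinear terms $\overline{W}_{|J|+1}\overline{W}_{n-|J|}$ together with the prefactor $(-\kappa)^{n}$ combine compatibly across the sum over $J$ because the exponent is additive in the index. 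The inhomogeneous term $\tfrac{1}{x_1(1-x_1)}\big[(\kappa a+\kappa b+1)N+\kappa N(N-1)\big]$ of the first loop equation is the sole place where $N$ appears with a distinctive homogeneity, and matching it pins down $\tilde N=-\kappa N$, just as in the Gaussian and Laguerre cases.

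Granting the invariance, uniqueness gives $\hat W_n^{\rm J}=\overline{W}_n^{\rm J}(\,\cdot\,;N,\kappa,a,b)$, whence $\overline{W}_n^{\rm J}(\,\cdot\,;-\kappa N,\kappa^{-1},-a,-b)=\tilde W_n^{\rm J}=(-\kappa)^{n}\overline{W}_n^{\rm J}(\,\cdot\,;N,\kappa,a,b)$, which is (\ref{1.4eJ}). The coefficient form (\ref{1.4eJp}) then follows by inserting the expansion (\ref{1.4fJ}) into (\ref{1.4eJ}) and equating powers of $x_1^{-p_1-1}\cdots x_n^{-p_n-1}$, the arguments being left unaltered by the map, exactly as (\ref{1.4eLp}) was deduced from (\ref{1.4eL}).

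The step I expect to be the main obstacle is the $\kappa$-bookkeeping in the invariance check. Since the Jacobi support $(0,1)$ is rigid, there is no rescaling of the $x_j$ available to absorb stray powers of $\kappa$ (in contrast to the Gaussian case, where $x_j\mapsto i\sqrt{\kappa}\,x_j$ does this), so one must verify carefully that the structurally distinct constituents of (\ref{4.5}) --- the first-order term versus the two potential terms, and the diagonal term $\overline{W}_{n+1}(x_1,x_1,J_n)$ versus the products $\overline{W}_{|J|+1}\overline{W}_{n-|J|}$ --- all pick up the same power of $-\kappa$ under $(\kappa,a,b,N)\mapsto(\kappa^{-1},-a,-b,-\kappa N)$ together with $\overline{W}_n\mapsto(-\kappa)^{n}\overline{W}_n$, and that the first loop equation, inhomogeneous term included, transforms consistently (this being the unique source of the constraint $\tilde N=-\kappa N$). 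Should a head-on treatment of (\ref{4.5}) prove awkward, the natural alternative is to route the argument through the dictionary relating $\overline{W}_n^{\rm J}$ to the starred quantity $\overline{W}_n^{\rm J^*}$ --- the Jacobi analogue of the relation $\overline{W}_n^{\rm L^*}(x_1,\dots,x_n)=\kappa^{-n}\overline{W}_n^{\rm L}(x_1/\kappa,\dots,x_n/\kappa)\big|_{\kappa a\mapsto a}$ used in Section \ref{S3.3} --- for which the weight enters the loop equations without a $\kappa$ prefactor, and then transfer the resulting functional equation back.
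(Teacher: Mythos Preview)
Your proposal is correct and takes essentially the same approach as the paper: exhibit (\ref{1.4eJ}) as an invariance of the full Jacobi loop hierarchy under $(\kappa,a,b,N)\mapsto(\kappa^{-1},-a,-b,-\kappa N)$ together with $\overline W_n\mapsto(-\kappa)^n\overline W_n$, then invoke the uniqueness from \cite{FRW17} and read off (\ref{1.4eJp}) by substituting (\ref{1.4fJ}). The paper's own proof is a two-line pointer to the Gaussian and Laguerre arguments; you have simply unpacked those steps in the Jacobi setting, correctly identifying that the $n=1$ inhomogeneous term is what fixes $\tilde N=-\kappa N$ and that the absence of an $x_j$-rescaling is the Jacobi-specific feature to watch in the bookkeeping.
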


\begin{proof}
Proceeding as in the Gaussian and Laguerre cases,
 the functional equation (\ref{1.4eJ}) follows as an
invariance of the full set of loop equations, and (\ref{1.4eJp}) is
a consequence of this which follows
upon substituting (\ref{1.4fJ}) in (\ref{1.4eJ}) and equating coefficients.
\end{proof}

 As in the Gaussian and Laguerre cases,
 we can use Proposition \ref{p4.1} to obtain a generalisation of (\ref{w1}) and (\ref{w1k}).
To begin we expand for $\kappa \to \infty$
  \begin{equation}\label{1.4eJq} 
   \overline{W}_n^{\rm J}(x_1,\dots,x_n; N, \kappa,a,b) = {1 \over \kappa^{n-1}}   {W}_n^{0,\rm J}(x_1,\dots,x_n; N,a,b) + \cdots
  \end{equation}
  and also expand for $N \to \infty$
  \begin{equation}\label{1.4eJr}  
  {1 \over N}    \overline{W}_n^{\rm J^*}(x_1,\dots,x_n; N, \kappa,a,b) \Big |_{\kappa = \alpha/N}  = 
   {W}_n^{0,\rm J}(x_1,\dots,x_n; \alpha,a,b) + \cdots 
   \end{equation}  
  Then we expand the terms on the right hand side of each of these,
  \begin{equation}\label{x1J}
 {W}_n^{0, \rm J}(x_1,\dots,x_n;N,a,b) =
 {1 \over x_1 \cdots x_n} \sum_{p_1,\dots,p_n = 0}^\infty {\mu_{p_1,\dots,p_n}^{0, \rm J^*}(N,a,b) \over
 x_1^{p_1} \cdots x_n^{p_n}},
 \end{equation}
 and
 \begin{equation}\label{x2J}
 {W}_n^{0, \rm J^*}(x_1,\dots,x_n; \alpha,a) =
 {1 \over x_1 \cdots x_n} \sum_{p_1,\dots,p_n = 0}^\infty {\mu_{p_1,\dots,p_n}^{0, \rm J^*}(\alpha,a) \over
 x_1^{p_1} \cdots x_n^{p_n}}.
 \end{equation}

\begin{corollary}
We have
 \begin{equation}\label{WhJ}
  {W}_n^{\rm J^*}(x_1,\dots,x_n; N,\kappa ,a,b)  = {(-1)^n \over N}   {W}_n^{\rm J}(x_1,\dots,x_n; -N ,-a,-b)
 \end{equation}
 and
\begin{equation}\label{Wh1J}
\mu_{p_1,\dots,p_n}^{0, \rm J^*}(N,a,b)  =   {(-1)^n \over N} 
\mu_{p_1,\dots,p_n}^{0, \rm J}(-N,-a,-b)  
 \end{equation}
 \end{corollary}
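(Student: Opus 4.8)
The plan is to mirror the argument already used in the Gaussian (Corollary following Proposition~\ref{p1.1}) and Laguerre (Corollary following Proposition~\ref{p3.1a}) cases, adapting the bookkeeping to the two shape parameters $a,b$ of the Jacobi weight. First I would record the relationship between the Jacobi $\beta$-ensemble with weight $x^{\beta a/2}(1-x)^{\beta b/2}$ (the ``J'' normalisation of subsection~\ref{S4.1}) and the one with weight $x^{a}(1-x)^{b}$ (the ``J${}^*$'' normalisation). Since the support $(0,1)$ is the same in both cases --- no rescaling of the $\lambda_j$ is needed, in contrast to the Gaussian and Laguerre cases --- the only change is $\beta a/2\mapsto a$ and $\beta b/2\mapsto b$, i.e.\ $\kappa a\mapsto a$, $\kappa b\mapsto b$. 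Hence from the definition of $\overline W_n$ via (\ref{1.4c+}),(\ref{1.4d}) one has the clean identity
\begin{equation}\label{DLJ}
 \overline{W}_n^{\rm J^*}(x_1,\dots,x_n; N, \kappa,a,b) = \overline{W}_n^{\rm J}(x_1,\dots,x_n; N, \kappa,a,b)\Big|_{\kappa a\mapsto a,\ \kappa b\mapsto b},
\end{equation}
with no prefactor of $\kappa$ at all, the $n$ power-counting factors cancelling because the variables are not rescaled.

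Next I would feed the functional equation (\ref{1.4eJ}) of Proposition~\ref{p4.1} into (\ref{DLJ}). Applying (\ref{1.4eJ}) on the right and then re-expressing the result back in the J${}^*$ normalisation at the transformed parameters $(-\kappa N,\kappa^{-1})$, the substitution in (\ref{1.4eJ}) sends $\kappa a\mapsto (\kappa^{-1})(-a)=-a/\kappa$, so that matching the J${}^*$ dictionary at the new temperature requires the J${}^*$ shape parameter to be $-a$ (and likewise $-b$); no rescaling of the $x_j$ is introduced because $(-\kappa)^{-n}$ from (\ref{1.4eJ}) exactly cancels the $\kappa^{-n}$ hidden in passing between the two normalisations, leaving only a sign $(-1)^{n}$. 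The outcome is
\begin{equation*}
 \overline{W}_n^{\rm J^*}(x_1,\dots,x_n; N,\kappa,a,b) = \frac{(-1)^n}{?}\, \overline{W}_n^{\rm J^*}(x_1,\dots,x_n;-\kappa N,\kappa^{-1},-a,-b),
\end{equation*}
and I would then specialise $\kappa=\alpha/N$, divide by $N$, and let $N\to\infty$, using the expansions (\ref{1.4eJq}),(\ref{1.4eJr}): the $N\to\infty$ limit of $N^{-1}\overline{W}_n^{\rm J^*}$ at $\kappa=\alpha/N$ is $W_n^{0,\rm J^*}(\cdot;\alpha,a,b)$ by definition, while on the right the factor $\kappa^{-(n-1)}=(N/\alpha)^{n-1}$ from (\ref{1.4eJq}) combined with the $1/N$ produces exactly $W_n^{0,\rm J}(\cdot;-\alpha,-a,-b)$ up to the constant $1/\alpha$. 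Setting $-\alpha=N$ then yields (\ref{WhJ}), and substituting the expansions (\ref{x1J}),(\ref{x2J}) and equating coefficients of $x_1^{-p_1-1}\cdots x_n^{-p_n-1}$ gives (\ref{Wh1J}).

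The only genuinely delicate step is the power-of-$\kappa$ accounting in passing between the J and J${}^*$ normalisations and then through the duality (\ref{1.4eJ}): one must check that the $(-\kappa)^{-n}$ in (\ref{1.4eJ}), the absence of any variable rescaling (peculiar to the Jacobi case), and the definitional prefactor relating $\overline{W}_n^{\rm J^*}$ to $\overline{W}_n^{\rm J}$ conspire to leave precisely $(-1)^{n}$ and no surviving power of $\kappa$ — this is what makes the limit $\kappa=\alpha/N$, $N\to\infty$ come out finite and is the analogue of the $(-i/\sqrt{\kappa})^n$ versus $(\sqrt{2\kappa})^{-n}$ cancellation in the Gaussian case. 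Everything else is the routine substitution-and-equate-coefficients manoeuvre already carried out twice above, and the consistency of (\ref{WhJ}) with (\ref{WS1J}) in the case $n=1$, together with the explicit forms (\ref{2.14J}),(\ref{Jc3}), provides an independent check.
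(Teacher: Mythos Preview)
Your overall strategy is the same as the paper's: relate the J${}^*$ ensemble to the J ensemble by the parameter substitution $\kappa a\mapsto a$, $\kappa b\mapsto b$ (with no rescaling of the $x_j$), apply the duality (\ref{1.4eJ}), set $\kappa=\alpha/N$, divide by $N$, take $N\to\infty$ via (\ref{1.4eJq}) and (\ref{1.4eJr}), and finally read off (\ref{Wh1J}) by comparing the expansions (\ref{x1J}), (\ref{x2J}). That is exactly the route taken in the paper's proof.

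There is, however, a genuine slip in your $\kappa$-bookkeeping. You first (correctly) note that the J$\leftrightarrow$J${}^*$ dictionary carries \emph{no} prefactor of $\kappa$ because the Jacobi support $(0,1)$ is not rescaled; but a few lines later you claim that ``$(-\kappa)^{-n}$ from (\ref{1.4eJ}) exactly cancels the $\kappa^{-n}$ hidden in passing between the two normalisations''. There is no such hidden $\kappa^{-n}$ to cancel against --- the factor $(-\kappa)^{-n}$ survives intact at the level of $\overline W_n$. The correct accounting is that after specialising $\kappa=\alpha/N$ the surviving $(-\alpha/N)^{-n}$ combines with the $1/N$ you insert and with the $(\kappa')^{-(n-1)}=(N/\alpha)^{-(n-1)}$ coming from the low-temperature expansion (\ref{1.4eJq}) on the right-hand side (where the coupling is $\kappa'=\kappa^{-1}=N/\alpha$). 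The product
\[
\frac{1}{N}\,(-\alpha/N)^{-n}\,(N/\alpha)^{-(n-1)} \;=\; \frac{(-1)^{n}}{\alpha}
\]
is what delivers the prefactor in (\ref{WhJ}) after relabelling $\alpha\mapsto N$. Your equation with the ``?'' and the sentence about cancellation should be replaced by this computation; once that is fixed, the rest of your outline goes through and matches the paper's argument.
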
  

\begin{proof}
It follows from (\ref{1.4eJ}) and the definition of J${}^*$ that
$$
{1 \over N}  \overline{W}_n^{\rm J^*}(x_1,\dots,x_n; N, \kappa,a,b) =   {(-\kappa)^{-n}\over N}  {W}_n^{\rm J}(x_1,\dots, x_n; -\kappa N ,\kappa^{-1},-a,-b).
 $$
 Taking the limit $N \to \infty$ on both sides using (\ref{1.4eJq}) and (\ref{1.4eJr}) implies   (\ref{WhJ}).
  Substituting (\ref{x1J}) and (\ref{x2J}) gives (\ref{Wh1J}).
\end{proof}

\begin{remark}
It is known that a limiting case of the Jacobi $\beta$-ensemble implies the
$\beta$-ensemble on the circle corresponding to (\ref{0.5});
see \cite[\S 3.9]{Fo10}. In the case of the latter, playing the role  of the statistics
(\ref{1.4c+}) and (\ref{1.4d}) is the modification of the $A_i$ therein by
$$
A_i = \sum_{p=1}^N {e^{i \theta_p} + z_i \over e^{i \theta_p} - z_i }.
$$
From \cite[Prop.~4.7]{WF15} we know
\begin{equation}\label{Cz}
\overline{W}_n(z_1,\dots,z_n;N,\kappa) = (- \kappa)^{-n}  \overline{W}_n(z_1,\dots,z_n;-\kappa N,\kappa^{-1})
\end{equation}
(cf.~(\ref{1.4eJ})).
\end{remark}

\subsection*{Acknowledgments}
	The research of PJF is part of the program of study supported
	by the Australian Research Council Centre of Excellence ACEMS,
	and the Discovery Project grant DP210102887.
	An input to this work was the knowledge gained from the presentation of
	M.~Voit as part of the Bielefeld-Melbourne on-line random matrix seminar in late October 2020.
	Thanks are due to  G.~Akemann for organising this. The feedback of the referee, by
	way of a thorough reading and helpful remarks, is much appreciated.

 \providecommand{\bysame}{\leavevmode\hbox to3em{\hrulefill}\thinspace}
\providecommand{\MR}{\relax\ifhmode\unskip\space\fi MR }
\providecommand{\MRhref}[2]{%
  \href{http://www.ams.org/mathscinet-getitem?mr=#1}{#2}
}
\providecommand{\href}[2]{#2}


\begin{thebibliography}{10}

\bibitem{AB19}
G.~Akemann and  S.S.~Byun,  \emph{The high temperature crossover for general 2D Coulomb gases}, J.Stat. Phys. \textbf{175} (2019), 1043--1065. 


 \bibitem{AJ81}
A.~Alastuey and B.~Jancovici, \emph{On the two-dimensional one-component
  {Coulomb} plasma}, J. Physique \textbf{42} (1981), 1--12.
  
  \bibitem{ABG12}
R.~Allez, J.P.~Bouchard and A.~Guionnet, \emph{Invariant beta ensembles and the Gauss-Wigner
crossover}, Phys. Rev. Lett. \textbf{109} (2012), 09412.

\bibitem{ABMV13}
R. Allez, J.-P. Bouchaud, S. N. Majumdar, P. Vivo, \emph{Invariant $\beta$-Wishart ensembles, crossover densities and asymptotic corrections to the Marchenko-Pastur law},
 J. Phys. \textbf{46}, 015001 (2013).


\bibitem{AJP98}
 A.~Andersen, A.D.~Jackson and H.J.~Pedersen, \emph{Rigidity and
 normal modes in random matrix spectra},
 Nucl. Phys. A \textbf{650} (1999),  213--223.
  
   \bibitem{AHV20}
S. Andraus, K. Hermann and M. Voit, 
\emph{Limit theorems and soft edge of freezing random matrix models via dual orthogonal polynomials},
arXiv:2009.01418.  
  
 
 
 
 
 
 
 \bibitem{AKM12} 
 S. Andraus, M. Katori, and S. Miyashita,
 \emph{Interacting particles on the line and Dunkl inter-twining operator of type A: Application to the freezing regime},
 J. Phys. A:  Math. Theor., \textbf{45} (2012),  395201.
 
 
 \bibitem{AV19a} 
 S.~Andraus, M.~Voit, 
 \emph{Limit theorems for multivariate Bessel processes in the freezing regime},
 Stoch. Proc. Appl. \textbf{129} (2019), 4771--4790
 
 \bibitem{AV19b}  
S. Andraus, M. Voit, 
 \emph{Central limit theorems for multivariate Bessel processes in the freezing regime. II. The covariance matrices},
 J. Approx. Theory  \textbf{246} (2019), 65--84.
 
 
 \bibitem{AW84}
R. ~Askey  and  J. ~Wimp, \emph{Associated  Laguerre  and  Hermite polynomials},
 Proc. Roy. Soc. Edinburgh A,  \textbf{84},  (1984) 15--37.
 
 
  \bibitem{BG15}
 A.~Borodin and V.~Gorin, \emph{
 General $\beta$-Jacobi corners process and the Gaussian free field},
 Comm. Pure Appl. Math., \textbf{68} (2015), 1774--1844.
 
 \bibitem{BEMN10}
G.~Borot, B.~Eynard, S.N. Majumdar, and C.~Nadal, \emph{Large deviations of the
  maximal eigenvalue of random matrices}, J. Stat. Mech. \textbf{2011} (2011),
  P11024.
  
   \bibitem{BG12}
G.~Borot and A.~Guionnet, \emph{Asymptotic expansion of $\beta$ matrix models
  in the one-cut regime}, Commun. Math. Phys. \textbf{317} (2013), 447--483.
 
   \bibitem{BMS11}
A.~Brini, M.~Mari\~{n}o, and S.~Stevan, \emph{The uses of the refined matrix
  model recursion}, J. Math. Phys. \textbf{52} (2011), 35--51.
 
 \bibitem{DE02}
I.~Dumitriu and A.~Edelman, \emph{Matrix models for beta ensembles}, J. Math.
  Phys. \textbf{43} (2002), 5830--5847.
  
  \bibitem{DE05}
I.~Dumitriu and A.~Edelman, \emph{Global spectrum fluctuations for the $\beta$-{H}ermite and
  $\beta$-{L}aguerre ensembles via matrix models}, J. Math. Phys. \textbf{47}
  (2006), 063302.
  
  
  \bibitem{DE05a}
I.~Dumitriu and A.~Edelman, \emph{Matrix models for beta ensembles},  Ann. I. H. Poincar\'e -- PR  \textbf{41} (2005), 1083--1099.

 \bibitem{DS15}
T.K.~Duy and T.~Shirai, \emph{The mean spectral measures of random Jacobi matrices related to Gaussian beta ensembles},
  Electron. Commun. Probab. \textbf{20} (2015), paper no.~68. 
  
   \bibitem{Fo93}
P.J. Forrester, \emph{Recurrence equations for the computation of correlations in the
  $1/r^2$ quantum many body system}, J. Stat. Phys. \textbf{72} (1993), 39--50.
  
  \bibitem{Fo10}
P.J. Forrester, \emph{Log-gases and random matrices}, Princeton University Press,
  Princeton, NJ, 2010.
  
    \bibitem{FJM00}
P.J. Forrester, B.~Jancovici, and D.S. McAnally, \emph{Analytic properties of
  the structure function for the one-dimensional one-component log-gas}, J.
  Stat. Phys. \textbf{102} (2000), 737--780.
  
  \bibitem{FM21}
  P.J. Forrester and G.~Mazzuca, \emph{The classical $\beta$-ensembles with $\beta$ proportional to 
  $1/N$: from loop equations to Dyson's disordered chain}, arXiv:2102.09201.
  
    \bibitem{FRW17}
P.J. Forrester, A.A. Rahman, and N.S. Witte, \emph{Large $N$ expansions for the Laguerre and Jacobi $\beta$ ensembles from the loop equations}, J. Math. Phys. \textbf{58}
  (2017), 113303.
  
  \bibitem{FR02b}
P.J. Forrester and E.M. Rains, \emph{Interpretations of some parameter
  dependent generalizations of classical matrix ensembles}, Prob. Theory
  Related Fields \textbf{131} (2005), 1--61.
  
  
   \bibitem{GK20}
  V. Gorin and V. Kleptsyn,
  \emph{Universal objects of the infinite beta random matrix theory},
  arXiv:2009.02006.
  
  
 \bibitem{GM20}  
  V. ~Gorin,  A. ~Marcus,  \emph{Crystallization  of  random  matrix  orbits},
    IMRN,  \textbf{2020} (2020), 883--913.
    
 
    
    
 \bibitem{JS20}    
    K. Johnson, and B. Simanek, \emph{Electrostatic equilibria on the unit circle via Jacobi polynomials},
    J.~Math.~Phys. \textbf{61} (2020), 122901.
    
  \bibitem{KM16}   
    M. Kornik and G. Michaletzky, \emph{Wigner matrices, the moments of roots of Hermite polynomials and the semicircle law}, J. Approx. Th.
    \textbf{211} (2016), 29--41.
    
    
   \bibitem{Ma20}
 G.~Mazzuca, \emph{On the mean density of states of some matrices related to
the beta ensembles and an application to the Toda lattice}, arXiv:2008.04604  

   \bibitem{MRW15}
F.~Mezzadri, A.K. Reynolds and B.~Winn, \emph{Moments of the eigenvalue densities and of the secular coefficients of $\beta$-ensembles}, Nonlinearity \textbf{30}
  (2017), 1034.
    
    
    \bibitem{Mi83}
A.A. Migdal, \emph{Loop equations and {$1/N$} expansions}, Phys. Rep.
  \textbf{102} (1983), 199--290.
  
  \bibitem{MMPS12}  
  A.D. Mironov, A.Yu. Morozov, A.V. Popolitov, and Sh.R.Shakirov, \emph{Resolvents and Seiberg-Witten representation for a Gaussian $\beta$-ensemble}, Theor. Math. Phys., \textbf{171} (2012), 505--522.
  
     \bibitem{DLMF}  NIST Digital Library of Mathematical Functions.
     
     \bibitem{SX21}
     A.~Soshnikov and Y.~Xu,
     \emph{Gaussian approximation of the distribution of strongly repelling particles on the unit circle},
     Theory Probab. Appl. \textbf{65} (2021), 588--615.




    
  
 \bibitem{St85a}  
  T.J.~Stieltjes, \emph{Sur quelques th\'{e}or\`{e}mes d'alg\`{e}bre},
  Comptes Rendus de l'Acad\'{e}mie des Sciences \textbf{100} (1885), 439--440.
  
  \bibitem{St85b}  
  T.J.~Stieltjes, \emph{Sur les polyn\^{o}mes de Jacobi},
  Comptes Rendus de l'Acad\'{e}mie des Sciences \textbf{100} (1885),  620--622.
  
  \bibitem{Tr21} 
    K.D.~Trinh, \emph{Limit theorems for moment processes of beta Dyson's
Brownian motions and beta Laguerre processes}, in preparation.
  
  \bibitem{TT20}   
   H.D.~Trinh and K.D.~Trinh, \emph{Beta Jacobi ensembles and associated Jacobi polynomials}, arXiv:2005.01100
 (2020).

  
 \bibitem{VP20}
 R. Van Peski, \emph{Limits and fluctuations of $p$-adic random matrix products},
 arXiv:2011.09356. 
  
  
   \bibitem{Vo19}
  M.~Voit, \emph{Central limit theorems for multivariate Bessel processes in the freezing regime},
  J. Approx. Theory   \textbf{239} (2019), 210--231.
  
  \bibitem{VW21}
  M.~Voit and J.H.C. Woerner,
 \emph{Functional  central  limit  theorems
 for  multivariate Bessel  processes  in  the  freezing  regime},
 Stoch. Anal. Appl.  \textbf{39} (2021), 136--156.
 
 \bibitem{WF14}
N.S. Witte and P.J. Forrester, \emph{Moments of the {G}aussian $\beta$ ensembles
  and the large {$N$} expansion of the densities}, J. Math. Phys. \textbf{55}
  (2014), 083302.
  
  \bibitem{WF15}
N.S. Witte and P.J. Forrester, \emph{Loop equation analysis of the circular ensembles}, JHEP
  \textbf{2015} (2015), 173.
  
  \bibitem{Wu19} 
   A.~W\"unsche, \emph{Associated Hermite polynomials related to parabolic cylinder functions},
   Advances in Pure Mathematics \textbf{9} (2019), 15--42.

  \end{thebibliography}
 \end{document}